\documentclass[manuscript,screen]{acmart}

\setcopyright{none}
\acmJournal{TOCT}
\acmYear{2026}
\acmDOI{}

\usepackage{amsmath,mathtools}
\usepackage{booktabs}
\usepackage{enumitem}
\usepackage{microtype}
\usepackage{array}

\DeclareMathOperator{\Span}{span}

\DeclareMathOperator{\codim}{codim}
\DeclareMathOperator{\Sing}{Sing}
\DeclareMathOperator{\Tor}{Tor}

\DeclareMathOperator{\ABP}{ABP}

\newcommand{\F}{\mathbb{F}}

\newcommand{\Pj}{\mathbb{P}}

\newcommand{\Cb}[1]{\underline{C}_{#1}}
\newcommand{\Dtwo}{\Delta_2}

\theoremstyle{plain}
\newtheorem{theorem}{Theorem}[section]
\newtheorem{lemma}[theorem]{Lemma}
\newtheorem{proposition}[theorem]{Proposition}
\newtheorem{corollary}[theorem]{Corollary}
\theoremstyle{definition}
\newtheorem{definition}[theorem]{Definition}

\begin{document}

\title[Two-Cut Coherence of Quintic Forms]{Two-Cut Coherence of Quintic Forms: Lifting Separations and Second-Derivative Completeness}

\author{Karthik Sheshadri}
\affiliation{%
  \institution{Independent researcher}
  \city{San Jose}
  \state{California}
  \country{USA}}
\email{karthiksheshadri217@gmail.com}
\renewcommand{\shortauthors}{K. Sheshadri}

\begin{abstract}
For a homogeneous polynomial $f$ of degree $d$, the degree-$k$ restricted strength $C_k(f)$ is the least number of products needed in a decomposition with factor degrees $k$ and $d-k$.  We introduce a two-cut coherence parameter
\[
  C_{k,\ell}(f)=\min\left\{r:
  f=\sum_{i,j=1}^{r}p_i m_{ij}q_j,
  \ \deg p_i=k,\ \deg m_{ij}=\ell-k,\ \deg q_j=d-\ell
  \right\},
\]
which requires two interfaces to be realized by one common factorization.  It equals the minimum common endpoint width of a three-block compressed transfer network, or equivalently the minimum, over all tensor lifts of $f$ through commutative multiplication, of the maximum of the two tensor-train endpoint ranks.

Our main result is an extraction-completeness theorem for quintics at cuts $(1,3)$.  Let
\[
  \Dtwo(f)=\max_{u} C_1(\partial_u^2 f)
\]
be the largest polynomial slice rank of a second directional derivative, and put $t=C_3(f)$.  Over an algebraically closed field of characteristic zero,
\[
  \left\lceil\frac{\Dtwo(f)}{3}\right\rceil
  \le \Cb{1,3}(f)
  \le C_{1,3}(f)
  \le t\Dtwo(f)+2t^2.
\]
Thus, when $C_3$ is bounded, ordinary and border two-cut coherence are equivalent up to constants to a one-cut obstruction exposed by a second derivative.

We also prove a border-stable lifting separation.  For coprime nonzero cubics $A,B$, the quintic $L(A,B)=abA+cdB$ has ordinary and border local values $C_1=C_3=2$, while
\[
  \left\lceil\frac{\max\{C_1(A),C_1(B)\}}{3}\right\rceil
  \le \Cb{1,3}(L)
  \le C_1(A)+C_1(B).
\]
Taking $A=\sum_i v_i^3$ gives an unbounded gap between separately optimal local interfaces and a common interface, even in border complexity.
\end{abstract}

\begin{CCSXML}
<ccs2012>
   <concept>
       <concept_id>10003752.10003753.10003758.10010624</concept_id>
       <concept_desc>Theory of computation~Algebraic complexity theory</concept_desc>
       <concept_significance>500</concept_significance>
   </concept>
   <concept>
       <concept_id>10002950.10003648.10003662</concept_id>
       <concept_desc>Mathematics of computing~Algebraic geometry</concept_desc>
       <concept_significance>300</concept_significance>
   </concept>
</ccs2012>
\end{CCSXML}

\ccsdesc[500]{Theory of computation~Algebraic complexity theory}
\ccsdesc[300]{Mathematics of computing~Algebraic geometry}

\keywords{algebraic complexity, restricted strength, polynomial slice rank, algebraic branching programs, tensor trains, border complexity}

\maketitle

\section{Introduction}\label{sec:intro}

A decomposition of a homogeneous polynomial across one degree cut gives a local description.  If $f\in R_d$ and $0<k<d$, the restricted strength
\[
  C_k(f)=\min\left\{r:f=\sum_{i=1}^r g_i h_i,
  \ \deg g_i=k,\ \deg h_i=d-k\right\}
\]
asks for the smallest interface at degree $k$.  Restricted strength is a well-established refinement of polynomial strength and, at $k=1$, is polynomial slice rank; it has also been used to lower-bound layers of homogeneous algebraic branching programs (ABPs)~\cite{AnanyanHochster2020,BikDraismaEggermont2019,GesmundoGhosalIkenmeyerLysikov2022,FlaviGesmundoOnetoVentura2025}.

Two separately small interfaces need not arise from one common computation.  Fix $0<k<\ell<d$.  We study the least $r$ for which
\begin{equation}\label{eq:twocut-intro}
  f=\sum_{i,j=1}^r p_i m_{ij}q_j,
  \qquad
  p_i\in R_k,\quad m_{ij}\in R_{\ell-k},\quad q_j\in R_{d-\ell}.
\end{equation}
We call this quantity the \emph{two-cut coherence complexity} $C_{k,\ell}(f)$.  Regrouping at either endpoint gives
\[
  C_{k,\ell}(f)\ge \max\{C_k(f),C_\ell(f)\},
\]
but the inequality can be highly non-tight.  The central issue is whether this extra cost is genuinely new or simply conceals an ordinary one-cut obstruction.

\subsection{Main theorem}

For quintics and cuts $(1,3)$, the obstruction is completely detected, up to constants at bounded $C_3$, by second derivatives.  For a directional derivative $\partial_u$, define
\begin{equation}\label{eq:Delta-def-intro}
  \Dtwo(f)=\max_{u} C_1(\partial_u^2 f).
\end{equation}
Our main theorem is the following dimension-free sandwich.

\begin{theorem}[Second-derivative completeness]\label{thm:main-intro}
Let $f$ be a nonzero quintic over an algebraically closed field of characteristic zero, and let $t=C_3(f)$.  Then
\begin{equation}\label{eq:main-sandwich-intro}
  \boxed{
  \left\lceil\frac{\Dtwo(f)}{3}\right\rceil
  \le \Cb{1,3}(f)
  \le C_{1,3}(f)
  \le t\Dtwo(f)+2t^2.}
\end{equation}
\end{theorem}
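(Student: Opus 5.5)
The plan is to prove the three inequalities separately. The middle one, $\Cb{1,3}(f)\le C_{1,3}(f)$, is immediate from the definitions. For the other two, the idea is to compute $\partial_u^2$ of a two-cut or one-cut decomposition directly and use the Leibniz rule.

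\emph{Lower bound.} Suppose $f=\sum_{i,j=1}^r p_i m_{ij} q_j$ with $p_i$ linear and $m_{ij},q_j$ quadratic, and fix a direction $u$. Since $\partial_u^2 p_i=0$, we have
\[
\partial_u^2 f=\sum_i p_i\,\partial_u^2\Bigl(\sum_j m_{ij}q_j\Bigr)+2\,\partial_u\Bigl(\sum_j M_j q_j\Bigr),
\qquad M_j:=\sum_i (\partial_u p_i)\, m_{ij}.
\]
Here each $\partial_u p_i$ is a scalar, so $M_j$ is a quadratic. The first sum consists of $r$ terms, each of the form (linear)$\cdot$(cubic). The second expands as
\[
2\,\partial_u\Bigl(\sum_j M_j q_j\Bigr)=2\sum_j\bigl((\partial_u M_j)\,q_j+M_j\,\partial_u q_j\bigr),
\]
which is $2r$ further products, each with a linear factor. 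Hence $C_1(\partial_u^2 f)\le 3r$ for every $u$, and so $\Dtwo(f)\le 3C_{1,3}(f)$.

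To pass to border complexity, I will use that $\{g\in R_3: C_1(g)\le s\}$ is Zariski closed. Indeed, $C_1(g)\le s$ holds exactly when $g$ lies in the ideal generated by some $s$-dimensional space of linear forms. This condition is the image of an incidence variety over a Grassmannian, and that image is closed by properness. Since $f\mapsto\partial_u^2 f$ is linear, if $f$ is a limit of forms with $C_{1,3}\le r$, then $\partial_u^2 f$ is a limit of cubics with $C_1\le 3r$, so $C_1(\partial_u^2 f)\le 3r$ as well. Taking the maximum over $u$ gives $\lceil\Dtwo(f)/3\rceil\le\Cb{1,3}(f)$.

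\emph{Upper bound.} Take a minimal decomposition $f=\sum_{j=1}^t g_j h_j$ with $g_j$ cubic and $h_j$ quadratic. By minimality the $h_j$ are linearly independent; otherwise one could rewrite the sum with fewer terms. For each $u$, the Leibniz rule gives
\[
\partial_u^2 f=\sum_j\bigl[(\partial_u^2 g_j)h_j+2(\partial_u g_j)(\partial_u h_j)+c_j(u)\,g_j\bigr],
\qquad c_j(u)=\partial_u^2 h_j\in\F .
\]
Here $\partial_u^2 g_j$ and $\partial_u h_j$ are linear forms. The first two groups therefore contribute at most $2t$ products with a linear factor, so
\[
C_1\Bigl(\sum_j c_j(u)\,g_j\Bigr)\le \Dtwo(f)+2t .
\]
Next, $c_j(u)$ is, up to a factor $2$, the quadratic form $h_j$ evaluated at $u$. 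Since $h_1,\dots,h_t$ are linearly independent functions, I can choose $u_1,\dots,u_t$ such that the matrix $\bigl(c_j(u_k)\bigr)_{j,k}$ is invertible. Inverting this matrix writes each $g_j$ as a combination of the $t$ cubics $\sum_j c_j(u_k)\,g_j$, each of which has $C_1\le\Dtwo(f)+2t$.

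Collect all the linear forms appearing in these $t$ slice decompositions into one list $p_1,\dots,p_r$, where $r\le t\Dtwo(f)+2t^2$. Then every $g_j$ can be written as $\sum_i p_i m_{ij}$ with quadratics $m_{ij}$. Setting $q_j=h_j$ for $j\le t$, and padding with zero terms up to an $r\times r$ array (note $t\ge 1$ since $f\ne 0$, and $t\le r$ may be assumed after padding), we obtain
\[
f=\sum_{i,j} p_i\, m_{ij}\, q_j .
\]
This is a valid $(1,3)$ decomposition of width $r$, proving the upper bound.

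\emph{Main obstacle.} The delicate points are the closedness argument used for the border lower bound and the choice of the directions $u_k$. For the latter, I will argue via the fact that the evaluation functionals $h\mapsto h(u)$ on the span of the $h_j$ span its dual space, which holds because the squares $u^2$ span the symmetric square in characteristic zero.
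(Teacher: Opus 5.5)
Your proof is correct and follows the paper's argument. For the lower bound, both use a three-group Leibniz count giving $C_1(\partial_u^2 f)\le 3r$, combined with closedness of the slice-rank locus. For the upper bound, both choose evaluation directions dual to the quadrics of a minimal $C_3$ decomposition, absorb a $2t$-term Leibniz error, and slice-decompose the isolated cubics. The differences are cosmetic: you group the Leibniz terms as $p^{\mathsf T}\partial_u^2(Mq)$, $(\partial_u M_j)q_j$, $M_j\,\partial_u q_j$ instead of the paper's grouping, and you invert the evaluation matrix and assemble a rectangular network with right endpoint $\Span\{h_j\}$, rather than normalizing to $\partial_{u_j}^2Q_i=\delta_{ij}$ and placing the terms on a diagonal.
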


The left inequality follows from a three-group Leibniz identity: a width-$r$ representation of the form~\eqref{eq:twocut-intro} sends every square second derivative to a cubic of slice rank at most $3r$.  The right inequality starts with a minimal degree-$3$ decomposition
\[
  f=\sum_{i=1}^t G_iQ_i,
  \qquad G_i\in R_3,\quad Q_i\in R_2.
\]
Square directional derivatives of quadrics are point evaluations.  One can therefore choose $t$ directions dual to the quadratic endpoint space $\Span\{Q_i\}$.  Each derivative isolates one coefficient cubic $G_i$, up to a Leibniz error of slice rank at most $2t$.  Decomposing the $G_i$ into slices then builds a common two-cut network.

Since every nonzero quintic has $\Dtwo(f)\ge1$, Theorem~\ref{thm:main-intro} implies that for every fixed $t_0$,
\begin{equation}\label{eq:bounded-equivalence-intro}
  \frac{\Dtwo(f)}{3}
  \le \Cb{1,3}(f)
  \le C_{1,3}(f)
  \le (t_0+2t_0^2)\Dtwo(f)
  \qquad\text{whenever }C_3(f)\le t_0.
\end{equation}
Thus no family of quintics can simultaneously have bounded $C_3$, unbounded border two-cut coherence, and second-derivative slice complexity negligible compared with the joint width.

\subsection{A border-stable separation}

The completeness theorem does not collapse two-cut coherence to the maximum of its local widths.  For nonzero coprime cubics $A,B$, set
\[
  L(A,B)=abA+cdB.
\]
The two local descriptions
\[
  L=a(bA)+c(dB),
  \qquad
  L=A(ab)+B(cd)
\]
have width two, and coprimality makes $L$ irreducible.  Nevertheless,
\begin{equation}\label{eq:lifting-intro}
  \left\lceil\frac{\max\{C_1(A),C_1(B)\}}{3}\right\rceil
  \le \Cb{1,3}(L)
  \le C_{1,3}(L)
  \le C_1(A)+C_1(B).
\end{equation}
For the Fermat cubic $A_n(v)=\sum_{i=1}^n v_i^3$, we prove $C_1(A_n)=\lceil n/2\rceil$.  Hence
\[
  F_n=abA_n+cdv_1^3
\]
has local ordinary and border values $2,2$, while its common-interface width is $\Omega(n)$, including in border complexity.  This refutes any universal additive bound of the form $C_{k,\ell}\le C_k+C_\ell$.

\subsection{Computational meaning}

The quantity $C_{k,\ell}$ is not standard homogeneous ABP width.  In~\eqref{eq:twocut-intro}, the middle labels $m_{ij}$ are arbitrary homogeneous polynomials and their internal complexity is free.  We prove that $C_{k,\ell}$ is exactly the minimum interface width of a three-block compressed transfer network.  Equivalently, if
\[
  \mu:R_k\otimes R_{\ell-k}\otimes R_{d-\ell}\longrightarrow R_d
\]
is commutative multiplication, then $C_{k,\ell}(f)$ is the minimum endpoint tensor-train rank among all lifts in $\mu^{-1}(f)$.  Every standard homogeneous ABP yields such a lift, so $C_{k,\ell}$ is always an ABP-width lower bound.  For quintics at cuts $(1,3)$, the hidden quadratic labels can be compiled with polynomial overhead.

\subsection{Contributions and limitations}

The paper makes four contributions.
\begin{enumerate}[leftmargin=2em]
  \item It defines two-cut coherence and identifies its exact compressed-network and multiplication-fiber tensor interpretations.
  \item It proves the border-stable lifting theorem~\eqref{eq:lifting-intro}, giving an unbounded gap between separately optimized local interfaces and one common interface.
  \item It proves the sharp three-group second-derivative transfer lemma.
  \item It proves the universal extraction-completeness inequality~\eqref{eq:main-sandwich-intro} for quintics at cuts $(1,3)$.
\end{enumerate}

The results are deliberately narrower than a general circuit lower bound.  The separating family has an $O(n)$-size arithmetic formula, and the completeness theorem is specific to degree five and the cut profile $(1,2,2)$.  Nothing here separates $\mathsf{VP}$ from $\mathsf{VNP}$, determinant from permanent, or polynomial from superpolynomial computation.

\paragraph{Research process and verification.}
Generative AI systems, including ChatGPT, were used during the exploratory research process for candidate generation, adversarial criticism, proof drafting, symbolic-verification scaffolding, and exposition.  The author is responsible for the final mathematical statements, proofs, computations, citations, and submission decisions.  AI systems are not listed as authors or used as citable sources.  All mathematical claims are intended to be independently checkable from the paper; exact symbolic verification scripts are included as supplementary material, but no lower bound in the paper depends on a numerical or heuristic computation.

\section{Definitions and elementary properties}\label{sec:prelim}

Throughout, $\F$ is an algebraically closed field of characteristic zero, $V$ is a finite-dimensional vector space, and
\[
  R=\operatorname{Sym}(V^*)=\bigoplus_{d\ge0}R_d
\]
is the standard graded polynomial ring.  The zero polynomial is assigned complexity zero.

\begin{definition}[One-cut and two-cut complexity]\label{def:complexities}
For $f\in R_d$ and $0<k<d$, define
\[
  C_k(f)=\min\left\{r:f=\sum_{i=1}^r p_iq_i,
  \ p_i\in R_k,\ q_i\in R_{d-k}\right\}.
\]
For $0<k<\ell<d$, define
\[
  C_{k,\ell}(f)=\min\left\{r:f=\sum_{i,j=1}^r p_im_{ij}q_j,
  \ p_i\in R_k,\ m_{ij}\in R_{\ell-k},\ q_j\in R_{d-\ell}\right\}.
\]
Their border versions $\Cb{k}$ and $\Cb{k,\ell}$ are the least $r$ for which $f$ lies in the Zariski closure of the corresponding complexity-at-most-$r$ locus.
\end{definition}

\begin{proposition}[Endpoint-subspace formulation]\label{prop:endpoint-subspace}
For $f\in R_d$,
\begin{align}
  C_k(f)
  &=\min\{\dim P:P\le R_k,\ f\in P R_{d-k}\},\label{eq:one-subspace}\\
  C_{k,\ell}(f)
  &=\min\{\max(\dim P,\dim Q):
  P\le R_k,\ Q\le R_{d-\ell},\ f\in P R_{\ell-k}Q\}.
  \label{eq:two-subspace}
\end{align}
Consequently,
\begin{equation}\label{eq:local-lower}
  C_{k,\ell}(f)\ge \max\{C_k(f),C_\ell(f)\}.
\end{equation}
\end{proposition}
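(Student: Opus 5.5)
The plan is to prove each of the two equalities by converting decompositions into subspaces and back, and then to read off the inequality \eqref{eq:local-lower} by regrouping. For \eqref{eq:one-subspace}, suppose $f=\sum_{i=1}^r p_iq_i$ with $p_i\in R_k$, $q_i\in R_{d-k}$. Put $P=\Span\{p_1,\dots,p_r\}$. Then $\dim P\le r$ and $f\in PR_{d-k}$, where $PR_{d-k}$ denotes the span of all products. Conversely, suppose $f\in PR_{d-k}$ and choose a basis $p_1,\dots,p_s$ of $P$. Every element of $PR_{d-k}$ has the form $\sum_i p_i h_i$ with $h_i\in R_{d-k}$, because a product $pq$ with $p=\sum_i c_ip_i$ equals $\sum_i p_i(c_iq)$ and such sums are closed under addition. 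This gives a decomposition of length $s=\dim P$. The two minima therefore agree. The zero polynomial is consistent with both sides: it has complexity $0$ and lies in $\{0\}\cdot R_{d-k}$.

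For \eqref{eq:two-subspace}, start from a width-$r$ decomposition $f=\sum_{i,j}p_im_{ij}q_j$ and set $P=\Span\{p_i\}$, $Q=\Span\{q_j\}$. Both have dimension at most $r$, and $f\in PR_{\ell-k}Q$, the span of triple products. For the converse, suppose $\dim P=s$ and $\dim Q=s'$, and let $r=\max(s,s')$. Choose bases $p_1,\dots,p_s$ of $P$ and $q_1,\dots,q_{s'}$ of $Q$, and pad with zero vectors up to $r$ each; zero entries are allowed in $R_k$ and $R_{d-\ell}$. A product $pmq$ with $p\in P$, $q\in Q$ expands by bilinearity as $\sum_{i,j}p_i(c_id_jm)q_j$. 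A finite sum of such products then has the form $\sum_{i,j}p_im_{ij}q_j$, where the $m_{ij}\in R_{\ell-k}$ are obtained by adding the coefficient middles. This is a width-$r$ decomposition. The only subtle point is this padding: it is what makes the paper's square $r\times r$ index set match $\max(\dim P,\dim Q)$ rather than some other combination of the two dimensions. It is routine.

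For \eqref{eq:local-lower}, take an optimal pair $(P,Q)$. For the cut at $k$, $PR_{\ell-k}Q\subseteq P R_{d-k}$, since $R_{\ell-k}Q\subseteq R_{d-k}$. Hence $C_k(f)\le\dim P\le C_{k,\ell}(f)$ by \eqref{eq:one-subspace}. For the cut at $\ell$, $PR_{\ell-k}Q\subseteq R_\ell Q$. Applying \eqref{eq:one-subspace} with the roles of the factors swapped (multiplication is commutative, so $C_\ell(f)=\min\{\dim Q:f\in QR_\ell\}$ with $Q\le R_{d-\ell}$) gives $C_\ell(f)\le\dim Q\le C_{k,\ell}(f)$. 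There is no real obstacle here; the step needing the most care is the closure-under-sums argument that lets an arbitrary element of the spanned product space be rewritten with a fixed basis.
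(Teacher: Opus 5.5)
Your proposal is correct and follows the paper's own argument. You take spans of the endpoint factors in one direction, expand in bases of $P$ and $Q$ (collecting coefficients into the other factor or into the middle matrix, padding with zeros) in the other, and get the inequality by regrouping at each endpoint.
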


\begin{proof}
A product decomposition supplies the spans of its endpoint factors.  Conversely, after choosing bases of the endpoint spaces, one collects coefficients into the remaining factor or into the middle matrix.  Regrouping a two-cut representation first by $i$ and then by $j$ gives one-cut representations at $k$ and $\ell$, proving~\eqref{eq:local-lower}.
\end{proof}

For $C_1$, there is a useful geometric interpretation.

\begin{proposition}[Linear spaces and closedness of polynomial slice rank]\label{prop:slice-geometry}
Let $g\in R_d$ be nonzero.  Then $C_1(g)$ is the minimum codimension of a linear subspace $W\le V$ on which $g$ vanishes identically.  For each $s$, the projective locus
\[
  \{[g]\in\Pj(R_d):C_1(g)\le s\}
\]
is Zariski closed.
\end{proposition}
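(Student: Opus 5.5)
The plan is to prove the two assertions separately: first the characterization of $C_1(g)$ via linear subspaces, then closedness by an incidence-variety argument.

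\emph{Characterization.} Suppose $g=\sum_{i=1}^s \ell_i h_i$ with $\ell_i\in R_1=V^*$ and $h_i\in R_{d-1}$. Set $W=\bigcap_i\ker\ell_i\le V$. Then $\codim W\le s$, and $g$ vanishes on $W$ because every $\ell_i$ does.

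Conversely, suppose $g|_W\equiv 0$ with $\codim W=c$. Choose linear forms $\ell_1,\dots,\ell_c$ cutting out $W$ and extend them to a basis $\ell_1,\dots,\ell_n$ of $V^*$. Write $g$ as a polynomial in the $\ell$'s. The restriction $g|_W$ is obtained by setting $\ell_1=\dots=\ell_c=0$, so the monomials involving only $\ell_{c+1},\dots,\ell_n$ have total coefficient zero. Hence every surviving monomial is divisible by some $\ell_i$ with $i\le c$. Grouping the monomials by the first such $i$ gives $g=\sum_{i\le c}\ell_i h_i$, so $C_1(g)\le c$. Taking the minimum over both directions gives equality. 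Since $g\neq0$, the subspace $W$ is proper and the minimum is at least $1$.

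\emph{Closedness.} Work with codimension exactly $s$ when $s<n=\dim V$; if $s\ge n$, take $W=0$, on which every $g\in R_d$ with $d\ge1$ vanishes, so the locus is everything. Vanishing on some $W$ of codimension at most $s$ is equivalent to vanishing on some $W$ of codimension exactly $s$, since we may shrink $W$.

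Consider the incidence variety
\[
I=\{([g],W)\in \Pj(R_d)\times \mathrm{Gr}(n-s,V): g|_W\equiv0\}.
\]
The condition $g|_W\equiv 0$ is closed, because it says that $g$ lies in the kernel of the restriction map $R_d\to \operatorname{Sym}^d(W^*)$. This kernel forms a vector subbundle over the Grassmannian: the restriction map is surjective for every $W$, so its rank is constant and the kernel has constant rank. In local coordinates this is the vanishing of bilinear expressions in the Plücker (or chart) coordinates of $W$ and the coefficients of $g$.

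The Grassmannian is projective, so the projection $I\to\Pj(R_d)$ is proper and its image is closed. By the first part, this image is exactly $\{[g]: C_1(g)\le s\}$.

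\emph{Main obstacle.} The only delicate step is justifying that $I$ is closed, i.e.\ that the restriction condition is algebraic in $W$. The cleanest route is to cover $\mathrm{Gr}(n-s,V)$ by standard affine charts. On each chart, $W$ is the image of an $(n-s)\times n$ matrix of the form $[I\mid X]$, and $g|_W$ is the polynomial $g([I\mid X]^{T}y)$ in $y$. Its coefficients are polynomials in $X$ and linear in $g$, so their vanishing defines a closed subset of each chart.
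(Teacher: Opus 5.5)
Your proof is correct and takes essentially the same approach as the paper: the common kernel of the $\ell_i$ for one direction, the fact that forms vanishing on a linear subspace lie in the ideal of its defining linear forms for the converse, and an incidence variety over a Grassmannian whose projection to $\Pj(R_d)$ is proper. The paper cites that ideal-membership fact, whereas you verify it by an explicit change of coordinates. Your other additions are the reduction to codimension exactly $s$, the trivial case $s\ge n$, and the chart description of the incidence condition, all of which make explicit details the paper leaves implicit.
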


\begin{proof}
If $g=\sum_{i=1}^s\ell_i h_i$, then $g$ vanishes on the common kernel of the $\ell_i$, a subspace of codimension at most $s$.  Conversely, the homogeneous ideal of a codimension-$s$ linear subspace is generated by $s$ independent linear forms, so every form vanishing on it belongs to their ideal.

For closedness, consider the incidence variety of pairs $([g],W)$ with $W$ a codimension-at-most-$s$ linear subspace and $g|_W=0$.  It is closed over a projective Grassmannian, hence its projection to $\Pj(R_d)$ is closed.
\end{proof}

For $u\in V$, let $\partial_u$ denote the constant-coefficient directional derivative.  If $f\in R_5$, define
\begin{equation}\label{eq:Delta-def}
  \Dtwo(f)=\max_{u\in V} C_1(\partial_u^2f).
\end{equation}
The maximum exists because the values are integers in a finite range.  Scaling a nonzero direction does not change the slice rank.

\begin{lemma}\label{lem:Delta-positive}
Every nonzero quintic satisfies $\Dtwo(f)\ge1$.
\end{lemma}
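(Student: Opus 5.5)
We need to show that some direction $u$ gives $\partial_u^2 f\neq 0$. Once this is done, $C_1(\partial_u^2 f)\ge 1$ by the convention that only the zero polynomial has complexity zero: a nonzero cubic is not an empty sum. Hence $\Dtwo(f)\ge C_1(\partial_u^2 f)\ge 1$.

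To produce such a $u$, I will use characteristic zero and Euler-type identities for homogeneous forms. For $u\in V$, the iterated derivative $\partial_u^5 f$ is the constant $5!\,f(u)$. This follows by viewing $f$ as a symmetric $5$-linear form $\tilde f$ with $f(x)=\tilde f(x,\dots,x)$; then
\[
  \partial_u^5 f=5!\,\tilde f(u,u,u,u,u)=5!\,f(u).
\]
Since $f$ is a nonzero polynomial and $\F$ is infinite, being algebraically closed, there is a $u\in V$ with $f(u)\neq 0$. Characteristic zero gives $5!\neq 0$, so $\partial_u^5 f\neq 0$. Writing $\partial_u^5 f=\partial_u^3(\partial_u^2 f)$, we conclude that $\partial_u^2 f\neq 0$.

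No step is a genuine obstacle; the argument is elementary. The only points needing care are the identity $\partial_u^5 f = 5!\,f(u)$ and the use of characteristic zero to ensure $5!\neq 0$. As an alternative check for the identity, one can evaluate the cubic $\partial_u^2 f$ at $u$ itself: Euler's formula applied twice gives
\[
  (\partial_u^2 f)(u)=5\cdot 4\, f(u)=20\,f(u)\neq 0,
\]
which again shows directly that $\partial_u^2 f$ is a nonzero cubic.
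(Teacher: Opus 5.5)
Your proof is correct, but it takes a different route from the paper. The paper argues by contradiction. If $\partial_u^2 f=0$ for every $u$, then polarizing in $u$ shows that every second partial of $f$ vanishes, and in characteristic zero this forces $\deg f\le 1$.

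You instead give an explicit witness. Applying Euler's formula twice gives $(\partial_u^2 f)(u)=20f(u)$, so any $u$ with $f(u)\neq 0$ works. The computation $\partial_u^5 f=5!\,f(u)$ is a second check of the same fact.

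Your version is more direct and says more: the good directions contain the dense open set $\{f(u)\neq0\}$, which is useful if one later wants a generic choice of $u$. The paper's argument is the standard structural one and never has to locate a nonvanishing point. Both arguments use characteristic zero at one essential step: you need $20\neq0$ (or $5!\neq0$) in $\F$, and the paper needs it to pass from vanishing second partials to degree at most one.

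You also make explicit a small point the paper leaves implicit: a nonzero cubic has $C_1\ge1$, because only the zero form has complexity zero.
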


\begin{proof}
If $\partial_u^2f=0$ for all $u$, polarization implies that every second partial derivative of $f$ vanishes.  In characteristic zero this forces $f$ to have degree at most one, contradicting that it is a nonzero quintic.
\end{proof}

\section{The exact computational model}\label{sec:model}

\subsection{Compressed transfer networks}

\begin{definition}[Three-block compressed transfer network]\label{def:ctn}
A three-block homogeneous compressed transfer network of degree profile
$(k,\ell-k,d-\ell)$ consists of a left interface with labels $p_i\in R_k$, a right interface with labels $q_j\in R_{d-\ell}$, and arbitrary transfer labels $m_{ij}\in R_{\ell-k}$.  It computes
\[
  \sum_{i=1}^{r_L}\sum_{j=1}^{r_R}p_im_{ij}q_j.
\]
Its interface width is $\max\{r_L,r_R\}$.  The internal circuit complexity of the transfer labels is not charged.
\end{definition}

\begin{theorem}[Exact compressed-network interpretation]\label{thm:ctn-equivalence}
For every $f\in R_d$,
\[
  C_{k,\ell}(f)=
  \min\{\text{interface width of a compressed transfer network computing }f\}.
\]
\end{theorem}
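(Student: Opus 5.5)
The plan is to prove the two inequalities between $C_{k,\ell}(f)$ and the minimum interface width of a compressed transfer network separately. Both directions come from the definitions, with one padding step to handle unequal left and right widths.

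\emph{Representations give networks.} Suppose $f=\sum_{i,j=1}^r p_i m_{ij} q_j$ with $p_i\in R_k$, $m_{ij}\in R_{\ell-k}$ and $q_j\in R_{d-\ell}$. Then the network with left labels $p_1,\dots,p_r$, right labels $q_1,\dots,q_r$ and transfer labels $m_{ij}$ has $r_L=r_R=r$ and computes $f$ by Definition~\ref{def:ctn}. Its interface width is $r$, so the network minimum is at most $C_{k,\ell}(f)$.

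\emph{Networks give representations.} Suppose a network with widths $r_L,r_R$ computes $f$, and set $r=\max\{r_L,r_R\}$. I would pad the shorter side to length $r$. If $r_L<r$, put $p_i=0$ for $r_L<i\le r$, and put $m_{ij}=0$ for every index pair that was not present. Since $0$ lies in every $R_k$, $R_{\ell-k}$ and $R_{d-\ell}$, the padded labels are still homogeneous of the required degrees, and each added term is zero. The result is a representation of $f$ of the form in Definition~\ref{def:complexities} with $r$ indices on each side, so $C_{k,\ell}(f)\le r$. Minimizing over networks gives the reverse inequality.

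\emph{Edge cases.} If $f=0$, both quantities are $0$: the empty network has width $0$, and the complexity of the zero polynomial is $0$ by convention. Alternatively, one can read off the two-sided minimum directly from the subspace form~\eqref{eq:two-subspace} of Proposition~\ref{prop:endpoint-subspace}, taking $P=\Span\{p_i\}$ and $Q=\Span\{q_j\}$, whose dimensions are at most $r_L$ and $r_R$. The only step needing any care is this padding, together with the check that allowing $r_L\neq r_R$ in the network model buys nothing beyond the width $\max\{r_L,r_R\}$. That check is exactly why the interface width is defined as the maximum of the two sides.
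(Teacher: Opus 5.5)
Your proposal is correct and follows the paper's proof: square representations are already networks, and a rectangular network of width $r=\max\{r_L,r_R\}$ is zero-padded to an $r\times r$ representation. The case $r_R<r$, where you pad with $q_j=0$, is symmetric to the one you wrote out.
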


\begin{proof}
Definition~\ref{def:complexities} is exactly the square-interface case.  A rectangular network of width $r$ can be padded by zero labels to an $r\times r$ representation, and every square representation is a network.
\end{proof}

\subsection{Minimum tensor-train rank in a multiplication fiber}

Put
\[
  U=R_k,\qquad W=R_{\ell-k},\qquad Z=R_{d-\ell},
\]
and let
\[
  \mu:U\otimes W\otimes Z\longrightarrow R_d
\]
be commutative multiplication.  For $T\in U\otimes W\otimes Z$, let $\rho_U(T)$ and $\rho_Z(T)$ be the ranks of the two endpoint flattenings.  These are the two bond ranks of an order-three tensor train~\cite{Oseledets2011,YeLim2018}.

\begin{theorem}[Multiplication-fiber tensor-train characterization]\label{thm:fiber-tt}
For every $f\in R_d$,
\begin{equation}\label{eq:fiber-tt}
  C_{k,\ell}(f)
  =\min_{T\in\mu^{-1}(f)}\max\{\rho_U(T),\rho_Z(T)\}.
\end{equation}
\end{theorem}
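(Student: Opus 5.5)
We prove the two inequalities separately, using the endpoint-subspace formulation of Proposition~\ref{prop:endpoint-subspace} as the bridge between tensors and polynomial decompositions. The key linear-algebra fact is that for $T\in U\otimes W\otimes Z$, the flattening rank $\rho_U(T)$ is the dimension of the smallest subspace $P\le U$ with $T\in P\otimes W\otimes Z$. Likewise $\rho_Z(T)$ is the dimension of the smallest $Q\le Z$ with $T\in U\otimes W\otimes Q$. Moreover, if $T$ lies in both $P\otimes W\otimes Z$ and $U\otimes W\otimes Q$, then $T\in P\otimes W\otimes Q$, since $(P\otimes W\otimes Z)\cap(U\otimes W\otimes Q)=P\otimes W\otimes Q$. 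This intersection identity can be checked by choosing bases of $U$ and $Z$ extending bases of $P$ and $Q$.

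\emph{Inequality ``$\le$'' on the right.} Let $r=C_{k,\ell}(f)$ and take a representation $f=\sum_{i,j=1}^r p_i m_{ij} q_j$. Define the lift
\[
  T=\sum_{i,j=1}^r p_i\otimes m_{ij}\otimes q_j .
\]
Then $\mu(T)=f$, because $\mu$ is commutative multiplication. Also $T\in \Span\{p_i\}\otimes W\otimes Z$, so $\rho_U(T)\le r$, and symmetrically $\rho_Z(T)\le r$. Hence the minimum in~\eqref{eq:fiber-tt} is at most $C_{k,\ell}(f)$.

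\emph{Inequality ``$\ge$''.} Let $T\in\mu^{-1}(f)$ with $\max\{\rho_U(T),\rho_Z(T)\}=r$. Let $P$ be the image of the $U$-flattening, viewed as the minimal subspace of $U$ containing $T$, and define $Q$ in the same way. Then $\dim P,\dim Q\le r$, and by the intersection identity $T\in P\otimes W\otimes Q$. Choose bases $p_1,\dots,p_{r_L}$ of $P$ and $q_1,\dots,q_{r_R}$ of $Q$, and expand
\[
  T=\sum_{i,j} p_i\otimes m_{ij}\otimes q_j
\]
with uniquely determined $m_{ij}\in W$. Applying $\mu$ gives $f=\sum_{i,j} p_i m_{ij} q_j$. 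So $f\in P R_{\ell-k} Q$, and by~\eqref{eq:two-subspace} (or, after padding with zeros, Definition~\ref{def:complexities}) we get $C_{k,\ell}(f)\le r$.

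The main point requiring care is the intersection identity, which is what lets a single tensor realize both small endpoint spaces simultaneously. Without it, separately small flattenings would not obviously yield a common factorization. Everything else is bookkeeping. We should also note that $\mu^{-1}(f)$ is nonempty, because $\mu$ is surjective onto $R_d$ (monomials factor), so the minimum is attained over a nonempty set of integers.
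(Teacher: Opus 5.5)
Your proposal is correct and follows the paper's own proof: a width-$r$ representation gives the lift $\sum_{i,j}p_i\otimes m_{ij}\otimes q_j$ with endpoint supports of dimension at most $r$. Conversely, a lift with both flattening ranks at most $r$ lies in $P\otimes W\otimes Q$, and expanding in bases and applying $\mu$ recovers a width-$r$ representation. You additionally make explicit two points the paper leaves implicit: the intersection identity $(P\otimes W\otimes Z)\cap(U\otimes W\otimes Q)=P\otimes W\otimes Q$, and the nonemptiness of $\mu^{-1}(f)$.
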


\begin{proof}
A width-$r$ representation gives the lift
\[
  T=\sum_{i,j=1}^r p_i\otimes m_{ij}\otimes q_j,
\]
whose endpoint supports have dimension at most $r$.  Conversely, if both endpoint flattening ranks of a lift are at most $r$, then
\[
  T\in P\otimes W\otimes Q
\]
for endpoint spaces $P\le U$ and $Q\le Z$ of dimension at most $r$.  Expanding in bases of $P$ and $Q$ and applying $\mu$ yields a width-$r$ two-cut representation.
\end{proof}

Equation~\eqref{eq:fiber-tt} explains the role of commutativity.  For a fixed tensor, path ranks are determined by flattenings.  For a commutative polynomial, there are many tensor lifts in the affine fiber $\mu^{-1}(f)$, and different lifts may optimize different cuts.  The common problem minimizes both endpoint ranks on one lift.

\subsection{Relationship with homogeneous ABPs}

A standard homogeneous single-source/single-sink ABP is layered by degree and has linear edge labels.

\begin{theorem}[ABP lower-bound direction]\label{thm:abp-lower}
Suppose a homogeneous ABP computing $f$ has widths $w_k$ and $w_\ell$ at layers $k$ and $\ell$.  Then
\[
  C_{k,\ell}(f)\le\max\{w_k,w_\ell\}.
\]
In particular, $C_{k,\ell}(f)$ lower-bounds the minimum homogeneous ABP width of $f$.
\end{theorem}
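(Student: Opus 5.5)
The plan is to read off a two-cut representation directly from the layered structure of the ABP. Let the ABP have source $s$, sink $t$, and layers indexed by degree $0,1,\dots,d$. Let the nodes at layer $k$ be $v_1,\dots,v_{w_k}$ and the nodes at layer $\ell$ be $v'_1,\dots,v'_{w_\ell}$. Every source-to-sink path crosses layer $k$ at exactly one node and layer $\ell$ at exactly one node. So the path sum factors as
\[
  f=\sum_{i=1}^{w_k}\sum_{j=1}^{w_\ell} p_i\, m_{ij}\, q_j ,
\]
with the following labels:
\begin{itemize}
\item $p_i$ is the sum over paths from $s$ to $v_i$ of the products of edge labels;
\item $m_{ij}$ is the corresponding path polynomial from $v_i$ to $v'_j$;
\item $q_j$ is the path polynomial from $v'_j$ to $t$.
\end{itemize}
Edge labels are linear forms and the ABP is homogeneous and layered by degree. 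Hence $p_i\in R_k$, $m_{ij}\in R_{\ell-k}$ and $q_j\in R_{d-\ell}$: each is a sum of products of exactly $k$, $\ell-k$ and $d-\ell$ linear forms respectively. The factorization through the two layers is the standard cut-set decomposition of path sums: split each path at its unique layer-$k$ and layer-$\ell$ vertices and distribute the sums.

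This is a three-block compressed transfer network in the sense of Definition~\ref{def:ctn}. Its left interface has width $w_k$ and its right interface has width $w_\ell$. By Theorem~\ref{thm:ctn-equivalence}, padding the rectangular array by zero labels to a square $r\times r$ array with $r=\max\{w_k,w_\ell\}$ gives $C_{k,\ell}(f)\le\max\{w_k,w_\ell\}$. One can equally argue from Proposition~\ref{prop:endpoint-subspace}: take $P=\Span\{p_i\}$ and $Q=\Span\{q_j\}$, which have dimensions at most $w_k$ and $w_\ell$, and note that $f\in PR_{\ell-k}Q$.

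For the final sentence, the width of the ABP is the maximum layer width, which is at least $\max\{w_k,w_\ell\}$. So every homogeneous ABP for $f$ has width at least $C_{k,\ell}(f)$.

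The only point needing care is the degree bookkeeping. If the ABP has edge labels that are constants, or possibly zero linear forms, the nodes at layer $k$ still compute degree-$k$ forms, because homogeneity forces every node at layer $k$ to compute a form of degree $k$ (possibly zero). I do not expect a genuine obstacle here; the argument is routine once the layer-crossing factorization is written down.
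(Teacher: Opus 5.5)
Your proposal is correct and matches the paper's proof: take $p_i$, $m_{ij}$, $q_j$ to be the path polynomials into layer $k$, between layers $k$ and $\ell$, and out of layer $\ell$, factor the path sum through the two layers, and pad the smaller layer with zero nodes. Your closing caveat about constant edge labels is not needed, because the paper defines homogeneous ABPs as layered by degree with linear edge labels.
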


\begin{proof}
Let $p_i$ be the source-to-layer-$k$ path polynomial, $m_{ij}$ the layer-$k$-to-layer-$\ell$ path polynomial, and $q_j$ the layer-$\ell$-to-sink path polynomial.  Path concatenation gives $f=\sum_{i,j}p_im_{ij}q_j$ with the required degrees.  Pad the smaller layer by zero nodes.
\end{proof}

The converse is not equality because the $m_{ij}$ may conceal expensive computations.  At fixed degree, however, they can be compiled.

\begin{proposition}[Quintic compilation]\label{prop:quintic-compile}
Let $f$ be a quintic in $N$ variables and let $r=C_{1,3}(f)$.  If $w_{\ABP}(f)$ is minimum homogeneous ABP width, then
\begin{equation}\label{eq:abp-poly}
  C_{1,3}(f)\le w_{\ABP}(f)\le Nr^2.
\end{equation}
\end{proposition}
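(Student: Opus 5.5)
The left inequality $C_{1,3}(f)\le w_{\ABP}(f)$ follows directly from Theorem~\ref{thm:abp-lower} with $k=1$, $\ell=3$. It only remains to note that the minimum ABP width dominates $\max\{w_1,w_3\}$. So the real content is the upper bound $w_{\ABP}(f)\le Nr^2$. I will prove it by converting an optimal two-cut representation into an explicit homogeneous ABP.

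Fix a representation $f=\sum_{i,j=1}^r p_i m_{ij} q_j$ with $p_i\in R_1$, $m_{ij}\in R_2$, $q_j\in R_2$, as guaranteed by Definition~\ref{def:complexities}. The ABP has six layers $0,\dots,5$: a source at layer $0$ and a sink at layer $5$.
\begin{itemize}
\item[] \textbf{Layer 1.} Take $r$ nodes $a_1,\dots,a_r$, with edge source$\to a_i$ labelled $p_i$. The path polynomial into $a_i$ is then $p_i$.
\item[] \textbf{Layer 3.} Take $r$ nodes $b_1,\dots,b_r$. We want the path polynomial into $b_j$ to be $\sum_i p_i m_{ij}$.
\item[] \textbf{Layer 4.} Fix coordinates $x_1,\dots,x_N$ and write $q_j=\sum_{s} x_s \lambda_{js}$ with $\lambda_{js}\in R_1$ (any split of a quadric into $N$ linear-times-variable terms). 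Use $N$ nodes $e_1,\dots,e_N$, with edge $b_j\to e_s$ labelled $\lambda_{js}$.
\item[] \textbf{Layer 5.} Add edge $e_s\to$ sink labelled $x_s$.
\end{itemize}
The paths from $b_j$ to the sink then sum to $\sum_s\lambda_{js}x_s=q_j$, so the whole ABP computes $\sum_j\bigl(\sum_i p_i m_{ij}\bigr)q_j=f$. Layer 4 has width $N\le Nr^2$.

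\textbf{Layer 2 (the main step).} We must realize the $r\times r$ matrix of quadrics $(m_{ij})$ between layers $1$ and $3$ with a single intermediate layer. Expand $m_{ij}=\sum_{s=1}^N x_s\,\ell_{ijs}$ with $\ell_{ijs}\in R_1$. Create nodes $c_{i,j,s}$ for $i,j\le r$ and $s\le N$, with edges
\[
a_i\to c_{i,j,s}\ \text{labelled } x_s,\qquad c_{i,j,s}\to b_j\ \text{labelled } \ell_{ijs}.
\]
The paths from $a_i$ to $b_j$ then sum to $\sum_s x_s\ell_{ijs}=m_{ij}$, as required. This layer has width $r^2N$.

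To check that it is not larger, I would index the nodes by $(i,j,s)$ and confirm nothing more is needed. Alternatively, index by $(j,s)$ only: edge $a_i\to c_{j,s}$ labelled $\ell_{ijs}$, edge $c_{j,s}\to b_j$ labelled $x_s$. This gives width $rN$, which is already below $Nr^2$, so either version suffices.

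With either choice every layer has width at most $\max\{r,N,Nr\}\le Nr^2$, using $r\ge1$ for nonzero $f$; the case $f=0$ is trivial. This proves~\eqref{eq:abp-poly}. The only delicate point is this middle-layer compilation of the hidden quadratic labels. It works because each quadric factors as a sum of at most $N$ products of linear forms, each with one coordinate variable as a factor, so the labels can be routed through a linear number of intermediate nodes.
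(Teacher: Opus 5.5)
Your proposal is correct and follows essentially the same construction as the paper: split each quadratic label as $\sum_s x_s\cdot(\text{linear})$ and route it through $(i,j,s)$-indexed layer-two nodes, and split the right quadrics the same way. Your refinements, namely sharing the $N$ layer-four nodes across all $j$ and indexing layer two by $(j,s)$, are also valid and in fact give the stronger bound $Nr$. Note that the bound $\max\{r,N,Nr\}$ applies only to that variant; the $(i,j,s)$ version has middle width up to $Nr^2$.
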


\begin{proof}
Only the upper bound needs proof.  Write each quadratic transfer label as
\[
  m_{ij}=\sum_{s=1}^N x_sL_{ijs}
\]
with $L_{ijs}$ linear.  Between layers one and three, use layer-two nodes indexed by $(i,j,s)$, with edges labeled $x_s$ and $L_{ijs}$.  This requires at most $Nr^2$ layer-two nodes.  Similarly write each quadratic right label as $q_j=\sum_s x_sK_{js}$ and use at most $Nr$ layer-four nodes.  The remaining layers have width at most $r$.
\end{proof}

The standard noncommutative and ordered set-multilinear settings do not have this multiplication-fiber ambiguity: ordered multiplication is injective, and layer ranks are characterized by coefficient-matrix or flattening ranks~\cite{Nisan1991,BlaeserIkenmeyerMahajanPandeySaurabh2020}.

\section{Second-derivative transfer}\label{sec:transfer}

The next identity is the source of the constant three.

\begin{lemma}[Three-group Leibniz identity]\label{lem:three-group}
Let
\[
  H=p^{\mathsf T}Mq
\]
be a quintic width-$r$ representation at cuts $(1,3)$, where $p$ is an $r$-vector of linear forms and $M,q$ have quadratic entries.  For every directional derivative $\delta$,
\begin{align}
  \delta^2H
  ={}&(2p_\delta^{\mathsf T}M_\delta+p^{\mathsf T}M_{\delta\delta})q\notag\\
  &+(2p_\delta^{\mathsf T}M+2p^{\mathsf T}M_\delta)q_\delta
  +p^{\mathsf T}Mq_{\delta\delta}.
  \label{eq:three-group}
\end{align}
Each line of~\eqref{eq:three-group} has polynomial slice rank at most $r$.  Hence
\begin{equation}\label{eq:derivative-transfer}
  C_1(\delta^2H)\le3r.
\end{equation}
\end{lemma}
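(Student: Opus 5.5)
The plan is to expand $\delta^2 H$ by the Leibniz rule, regroup the terms by which derivative of $q$ they multiply, and read off a slice-rank bound for each group. Here $p_\delta, M_\delta, q_\delta$ denote entrywise directional derivatives, and similarly for second derivatives. The derivative $\delta$ is a derivation on the polynomial ring, so it acts on the trilinear product $p^{\mathsf T}Mq$ entrywise. Applying $\delta$ twice gives the multinomial expansion
\[
\delta^2H=\sum_{a+b+c=2}\frac{2!}{a!\,b!\,c!}\,(\delta^a p)^{\mathsf T}(\delta^b M)(\delta^c q).
\]
Since $p$ consists of linear forms, $p_{\delta\delta}=0$, so the term with $a=2$ drops out.

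The remaining five terms are grouped by the value of $c$:
\begin{itemize}
\item $c=0$: the terms $2p_\delta^{\mathsf T}M_\delta q$ and $p^{\mathsf T}M_{\delta\delta}q$;
\item $c=1$: the terms $2p_\delta^{\mathsf T}Mq_\delta$ and $2p^{\mathsf T}M_\delta q_\delta$;
\item $c=2$: the single term $p^{\mathsf T}Mq_{\delta\delta}$.
\end{itemize}
This recovers exactly the three lines of~\eqref{eq:three-group}.

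Next, bound the slice rank of each line by exhibiting $r$ linear forms that generate an ideal containing it, and then invoke Definition~\ref{def:complexities} with $k=1$.

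\textbf{First line.} It equals $\sum_j \lambda_j q_j$ with $\lambda_j=(2p_\delta^{\mathsf T}M_\delta+p^{\mathsf T}M_{\delta\delta})_j$. Here $p_\delta$ is a vector of constants and $M_\delta$ has linear entries, so $2p_\delta^{\mathsf T}M_\delta$ is a row vector of linear forms. Likewise $M_{\delta\delta}$ has constant entries and $p$ is linear, so $p^{\mathsf T}M_{\delta\delta}$ is a row vector of linear forms. Thus each $\lambda_j$ is linear, and the line is a sum of $r$ products of a linear form with the quadric $q_j$. Its slice rank is at most $r$.

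\textbf{Second line.} Write $p_\delta^{\mathsf T}M q_\delta=\sum_j (p_\delta^{\mathsf T}M)_j (q_\delta)_j$ and $p^{\mathsf T}M_\delta q_\delta=\sum_j (p^{\mathsf T}M_\delta)_j(q_\delta)_j$. Grouping by $j$ gives $\sum_j (q_\delta)_j\cdot\bigl(2(p_\delta^{\mathsf T}M)_j+2(p^{\mathsf T}M_\delta)_j\bigr)$. Each $(q_\delta)_j$ is linear and the bracketed factor is quadratic, so the cubic is in the ideal of the $r$ linear forms $(q_\delta)_j$. Its slice rank is at most $r$.

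\textbf{Third line.} It equals $\sum_i p_i\,(Mq_{\delta\delta})_i$ with $p_i$ linear, so its slice rank is at most $r$.

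The lemma then follows from subadditivity of $C_1$: concatenating the three decompositions gives one with at most $3r$ terms, which proves~\eqref{eq:derivative-transfer}. If some group vanishes, the bound only improves, so no separate case is needed.

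The only real care point, rather than a genuine obstacle, is choosing the grouping. The naive five terms do not each have an obvious common linear factor tied to only $r$ forms. The term $p^{\mathsf T}M_{\delta\delta}q$ has to be absorbed with the $q$-group, using that $p^{\mathsf T}M_{\delta\delta}$ is linear, rather than with the $p$-group. Likewise the mixed term $p_\delta^{\mathsf T}Mq_\delta$ must be grouped through the linear forms $(q_\delta)_j$, not through $p$, since $p_\delta$ is constant. Beyond that, everything is degree bookkeeping that uses $\deg p=1$ and $\deg M=\deg q=2$.
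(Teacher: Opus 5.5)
Your proof is correct and matches the paper's: expand $\delta^2H$ by the product rule, then bound each line by grouping through the quadrics $q_j$ with linear coefficients, through the linear forms $(q_\delta)_j$, and through the linear forms $p_i$, respectively, and finish with subadditivity. One minor inaccuracy in your closing remark is that $p^{\mathsf T}M_{\delta\delta}q=p^{\mathsf T}(M_{\delta\delta}q)$ lies equally well in the ideal of the $p_i$, so it need not go with the $q$-group, though this does not affect the argument.
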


\begin{proof}
Apply the product rule twice.  In the first line, group by the $r$ quadratic entries of $q$; their coefficients are linear.  In the second, group by the $r$ linear entries of $q_\delta$; their coefficients are quadratic.  In the third, $q_{\delta\delta}$ is scalar-valued, so group by the $r$ linear entries of $p$.
\end{proof}

\begin{corollary}[Border-stable transfer]\label{cor:border-transfer}
For every quintic $f$,
\begin{equation}\label{eq:Delta-lower}
  \Dtwo(f)\le3\Cb{1,3}(f).
\end{equation}
Equivalently,
\[
  \left\lceil\frac{\Dtwo(f)}3\right\rceil\le\Cb{1,3}(f).
\]
\end{corollary}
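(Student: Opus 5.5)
The plan is to pass from the ordinary transfer bound of Lemma~\ref{lem:three-group} to the border setting by a closedness argument, using Proposition~\ref{prop:slice-geometry}. Put $r=\Cb{1,3}(f)$. By definition, $f$ lies in the Zariski closure $\overline{X_r}$ of the set $X_r$ of quintics $H$ with $C_{1,3}(H)\le r$. Fix a direction $u\in V$ and consider the linear map
\[
  D_u:R_5\to R_3,\qquad H\mapsto\partial_u^2H.
\]
Lemma~\ref{lem:three-group}, applied with $\delta=\partial_u$, gives $D_u(X_r)\subseteq Y_{3r}$, where
\[
  Y_{3r}=\{g\in R_3:C_1(g)\le 3r\}.
\]

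The key step is to show that $Y_{3r}$ is Zariski closed in the affine space $R_3$. Proposition~\ref{prop:slice-geometry} states closedness of the projective locus in $\Pj(R_3)$. Since $C_1$ is invariant under nonzero scaling and the zero polynomial is assigned complexity zero, $Y_{3r}$ is the affine cone over that projective locus together with the origin. An affine cone over a closed projective set is closed, because it is cut out by the same homogeneous equations. Hence $D_u^{-1}(Y_{3r})$ is closed, being the preimage of a closed set under a linear (hence continuous) map. It contains $X_r$, so it also contains $\overline{X_r}$, and in particular $f$.

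Therefore $C_1(\partial_u^2f)\le 3r$ for every $u$. Taking the maximum over $u$ gives $\Dtwo(f)\le 3\Cb{1,3}(f)$. The ceiling form follows because $\Cb{1,3}(f)$ is an integer.

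The only delicate point is the affine-versus-projective bookkeeping for the zero form in the closedness step; the rest is formal. If that step caused trouble, I would instead argue directly with the incidence variety over the Grassmannian of codimension-$3r$ subspaces in affine coordinates.
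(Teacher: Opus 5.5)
Your proof is correct and follows the same argument as the paper. Both apply Lemma~\ref{lem:three-group} to width-$r$ representations, pass to the border closure using the closedness in Proposition~\ref{prop:slice-geometry}, and then maximize over directions. Your formulation as the preimage of a closed set under the linear map $D_u$, together with the affine-cone remark, is a more careful rendering of the paper's sentence that ``differentiation carries the degeneration'' to a limit of cubics.
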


\begin{proof}
The exact statement follows from Lemma~\ref{lem:three-group}.  If $f$ is a limit of width-$r$ quintics, differentiation carries the degeneration to a limit of cubics of slice rank at most $3r$.  Proposition~\ref{prop:slice-geometry} says that the cubic slice-rank-at-most-$3r$ locus is closed, so the limit has the same bound.  Maximize over directions.
\end{proof}

For two distinct derivatives, the Leibniz expansion groups into four slices-per-interface rather than three.  The square identity is therefore stronger for the present purpose.  Appendix~\ref{app:sharpness} proves that the factor three is asymptotically optimal for a black-box transfer lemma applied to arbitrary $p^{\mathsf T}Mq$ representations: there are width-$r$ examples whose extracted cubic has slice rank $3r-1$.

\section{A border-stable lifting separation}\label{sec:lifting}

\begin{theorem}[Cubic lifting theorem]\label{thm:lifting}
Let $A,B\in\F[v_1,\ldots,v_n]_3$ be nonzero cubics with $\gcd(A,B)=1$, and set
\[
  L(A,B)=abA(v)+cdB(v).
\]
At cuts $(1,3)$,
\begin{equation}\label{eq:local-values}
  C_1(L)=\Cb{1}(L)=C_3(L)=\Cb{3}(L)=2,
\end{equation}
and
\begin{equation}\label{eq:lifting-main}
  \left\lceil\frac{\max\{C_1(A),C_1(B)\}}3\right\rceil
  \le\Cb{1,3}(L)
  \le C_{1,3}(L)
  \le C_1(A)+C_1(B).
\end{equation}
\end{theorem}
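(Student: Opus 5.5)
The argument splits into the local values~\eqref{eq:local-values}, the upper bound, and the lower bound in~\eqref{eq:lifting-main}. Throughout, $a,b,c,d$ are linear forms in new variables, independent of $v_1,\ldots,v_n$. The decompositions $L=a(bA)+c(dB)$ and $L=A(ab)+B(cd)$ already give $C_1(L)\le2$ and $C_3(L)\le2$, so it remains to exclude value one.

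\emph{Exact value one.} $C_1(L)=1$ means $L=\ell h$ with $\ell$ linear. $C_3(L)=1$ forces $L$ to factor as cubic times quadric. Both are impossible if $L$ is irreducible. I would prove irreducibility by viewing $L$ as a polynomial in the variables $a,b,c,d$ over $\F(v)$. It is the quadric $abA+cdB$, which has rank four because $A,B\ne0$, so it is irreducible over $\F(v)$. Any factorization over $\F[a,b,c,d,v]$ would then have a factor lying in $\F[v]$ and dividing both coefficients $A$ and $B$. Since $\gcd(A,B)=1$, that factor is a unit.

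\emph{Border value one.} Here I use closedness. By Proposition~\ref{prop:slice-geometry}, $\Cb{1}=C_1$. For $C_3$, the locus of forms $GQ$ is the image of $\Pj(R_3)\times\Pj(R_2)$ under a morphism of projective varieties, hence it is closed. So $\Cb{3}(L)=C_3(L)=2$ as well.

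\emph{Upper bound.} Write $A=\sum_{i\le s}\lambda_i\alpha_i$ and $B=\sum_{j\le s'}\mu_j\beta_j$, with $\lambda_i,\mu_j$ linear, $\alpha_i,\beta_j$ quadratic, $s=C_1(A)$ and $s'=C_1(B)$. Then
\[
L=\sum_{i\le s}\lambda_i\,b\,(a\alpha_i)+\sum_{j\le s'}\mu_j\,d\,(c\beta_j),
\]
which is a two-cut representation. The left endpoints are the linear forms $\lambda_i,\mu_j$, the middle labels are $b$ and $d$, and the right endpoints are the quadrics $a\alpha_i$ and $c\beta_j$. The middle matrix is diagonal. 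Both endpoint spaces have dimension at most $s+s'$, so Proposition~\ref{prop:endpoint-subspace} gives $C_{1,3}(L)\le C_1(A)+C_1(B)$.

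\emph{Lower bound.} I would apply Corollary~\ref{cor:border-transfer}, which gives $\Cb{1,3}(L)\ge\lceil\Dtwo(L)/3\rceil$, so it suffices to show $\Dtwo(L)\ge\max\{C_1(A),C_1(B)\}$. Take $u=\partial_a+\partial_b$. Since $ab$ is the only part of $L$ depending on $a,b$, we get $\partial_u^2L=2A(v)$. Slice rank of a form in the $v$-variables does not change when extra variables are adjoined: restrict a decomposition to $a=b=c=d=0$. Hence $C_1(\partial_u^2L)=C_1(A)$, and the direction $\partial_c+\partial_d$ gives $C_1(B)$ in the same way.

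The step most likely to need care is irreducibility, specifically the Gauss-lemma passage from $\F(v)$ to the polynomial ring. That passage is exactly where coprimality is used.
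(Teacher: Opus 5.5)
\textbf{Gap in the upper bound $C_{1,3}(L)\le C_1(A)+C_1(B)$.} Your representation has the wrong degree profile. At cuts $(1,3)$ for a quintic, Definition~\ref{def:complexities} requires $p_i\in R_1$, $m_{ij}\in R_2$ and $q_j\in R_2$. In $\sum_i\lambda_i\,b\,(a\alpha_i)$ the middle label $b$ is linear. The right endpoint $a\alpha_i$ is a cubic, not a quadric as you call it. So what you wrote is a profile $(1,1,3)$ representation, i.e.\ a representation at cuts $(1,2)$. It does not establish the bound on $C_{1,3}(L)$.

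The repair is just a regrouping of the same products:
\[
L=\sum_i\lambda_i\cdot\alpha_i\cdot(ab)+\sum_j\mu_j\cdot\beta_j\cdot(cd).
\]
The left endpoint space is $\Span\{\lambda_i,\mu_j\}$, of dimension at most $s+s'$. The middle labels $\alpha_i,\beta_j$ are quadratic. The right endpoint space is $\Span\{ab,cd\}$, of dimension $2\le s+s'$ because $A,B\neq0$ forces $s,s'\ge1$. The paper uses the mirror grouping $a\cdot\alpha_i\cdot(b\lambda_i)$ and $c\cdot\beta_j\cdot(d\mu_j)$, whose left endpoint space is $\Span\{a,c\}$. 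Either version, with Proposition~\ref{prop:endpoint-subspace}, gives the bound.

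\textbf{The rest is correct.} This covers the local values, the closedness argument for the border values, and the lower bound via $\partial_u^2L=2A$ and Corollary~\ref{cor:border-transfer}.

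Your irreducibility argument differs from the paper's. You view $L$ as a rank-four quadric in $a,b,c,d$ over $\F(v)$, so any factor in $\F[v][a,b,c,d]$ lies in $\F[v]$ and divides $\gcd(A,B)$. The paper treats $L$ as linear in $a$ over $\F[b,c,d,v]$ with coprime coefficients $bA$ and $cdB$. Both are valid Gauss-lemma arguments. Yours makes the role of coprimality slightly more transparent.

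Your remark that $C_1(A)$ is unchanged by adjoining $a,b,c,d$ (restrict to $a=b=c=d=0$) is a point the paper leaves implicit.
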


\begin{proof}
The local upper bounds are
\[
  L=a(bA)+c(dB),
  \qquad
  L=A(ab)+B(cd).
\]
For the lower bounds, view $L$ as a polynomial in $a$ over
\[
  S=\F[b,c,d,v_1,\ldots,v_n].
\]
Then
\[
  L=(bA)a+cdB.
\]
The coefficients $bA$ and $cdB$ are coprime, so $L$ is primitive in $S[a]$.  It is linear and nonconstant in $a$, hence irreducible over $\operatorname{Frac}(S)$ and therefore irreducible in $S[a]$ by Gauss's lemma.  A value-one $C_1$ or $C_3$ representation would be a nontrivial factorization of degrees $1+4$ or $3+2$, respectively.  Thus both ordinary local values are two.

The rank-one product loci for the splits $1+4$ and $3+2$ are projective images of products of projective spaces, hence closed.  Therefore the local border values are also two.

For the common-interface upper bound, choose optimal decompositions
\[
  A=\sum_{i=1}^{r_A}\lambda_iQ_i,
  \qquad
  B=\sum_{j=1}^{r_B}\mu_jS_j,
\]
where $r_A=C_1(A)$ and $r_B=C_1(B)$.  Put the terms on the diagonal of an $(r_A+r_B)\times(r_A+r_B)$ transfer matrix: use
\[
  (p_i,m_{ii},q_i)=(a,Q_i,b\lambda_i)
\]
for the $A$-terms and
\[
  (p_{r_A+j},m_{r_A+j,r_A+j},q_{r_A+j})=(c,S_j,d\mu_j)
\]
for the $B$-terms.  All off-diagonal entries are zero.  This gives $L$ and proves the upper bound.

Finally,
\[
  (\partial_a+\partial_b)^2L=2A,
  \qquad
  (\partial_c+\partial_d)^2L=2B.
\]
Hence $\Dtwo(L)\ge\max\{C_1(A),C_1(B)\}$.  Corollary~\ref{cor:border-transfer} proves the lower bound in~\eqref{eq:lifting-main}.
\end{proof}

The coprimality hypothesis is exact for this presentation: if $A$ and $B$ have a common nonconstant factor, it divides $L$.

\subsection{The Fermat cubic}

\begin{proposition}\label{prop:Fermat-slice}
Let
\[
  A_n(v)=\sum_{i=1}^n v_i^3.
\]
Then
\begin{equation}\label{eq:Fermat-slice}
  C_1(A_n)=\left\lceil\frac n2\right\rceil.
\end{equation}
\end{proposition}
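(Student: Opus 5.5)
By Proposition~\ref{prop:slice-geometry}, $C_1(A_n)$ is the minimum codimension of a linear subspace $W\le\F^n$ on which $A_n$ vanishes identically. The plan is therefore to show that the largest such subspace has dimension exactly $\lfloor n/2\rfloor$. Then $C_1(A_n)=n-\lfloor n/2\rfloor=\lceil n/2\rceil$.

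\emph{Upper bound.} Pair the variables. For each $i\le\lfloor n/2\rfloor$,
\[
  v_{2i-1}^3+v_{2i}^3=(v_{2i-1}+v_{2i})\,(v_{2i-1}^2-v_{2i-1}v_{2i}+v_{2i}^2).
\]
If $n$ is odd, write the leftover term as $v_n\cdot v_n^2$. This expresses $A_n$ as a sum of $\lceil n/2\rceil$ products of a linear form with a quadric. Equivalently, $A_n$ vanishes on the subspace cut out by the $\lceil n/2\rceil$ linear forms $v_{2i-1}+v_{2i}$ (together with $v_n$ when $n$ is odd).

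\emph{Lower bound.} Suppose $A_n$ vanishes on a linear subspace $W$ with $\dim W=m$. I will show $m\le n/2$.
\begin{enumerate}[leftmargin=2em]
  \item Since the characteristic is zero, polarization shows that the symmetric trilinear form $T(x,y,z)=\sum_i x_iy_iz_i$ vanishes on $W\times W\times W$. Hence for every $w\in W$, the diagonal quadratic form $q_w(y)=\sum_i w_iy_i^2$ vanishes identically on $W$.
  \item Let $S$ be the union of the coordinate supports of all vectors in $W$. A generic $w\in W$ has $w_i\neq0$ for every $i\in S$, because for each such $i$ the condition $w_i\ne0$ is a nonempty Zariski-open condition on $W$.
  \item Every vector of $W$ is supported in $S$, so $W$ embeds into $\F^S$. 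On $\F^S$, the form $q_w$ is nondegenerate, since it is diagonal with nonzero entries $w_i$.
  \item $W$ is a totally isotropic subspace for this nondegenerate quadratic form on $\F^S$. Hence $\dim W\le|S|/2\le n/2$, which gives $m\le\lfloor n/2\rfloor$.
\end{enumerate}

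The only delicate point is step~2: choosing $w$ so that $q_w$ is nondegenerate on the full coordinate support of $W$. Taking a generic $w$ with maximal support handles this, because the whole of $W$ then lives in the coordinates where $q_w$ is nondegenerate. The remaining steps use only the standard bound that a totally isotropic subspace of a nondegenerate quadratic space has dimension at most half the ambient dimension, which holds over any field of characteristic not $2$.
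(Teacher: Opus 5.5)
Your proof is correct and follows essentially the paper's argument: the same pairing subspace for the upper bound (which you present equivalently via the factorization $v_{2i-1}^3+v_{2i}^3=(v_{2i-1}+v_{2i})(v_{2i-1}^2-v_{2i-1}v_{2i}+v_{2i}^2)$), and for the lower bound the same polarization, which makes $W$ totally isotropic for the nondegenerate diagonal form $\sum_{i\in S}w_iu_iv_i$ on $\F^S$ with $w\in W$ of full support. The only difference is cosmetic: you take $S$ to be the union of supports and then choose a generic $w$, whereas the paper picks $w$ of maximal support and shows that every vector of $W$ is supported in it.
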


\begin{proof}
For the upper bound, impose $v_{2j}=-v_{2j-1}$ for each pair, and if $n$ is odd also impose $v_n=0$.  This gives a linear subspace of dimension $\lfloor n/2\rfloor$ on which $A_n$ vanishes.  Proposition~\ref{prop:slice-geometry} gives $C_1(A_n)\le\lceil n/2\rceil$.

For the reverse inequality, let $W\le\F^n$ be a linear subspace on which $A_n$ vanishes.  Choose $w\in W$ with maximal coordinate support $S$.  Every vector of $W$ is supported in $S$: otherwise a generic linear combination with $w$ would have strictly larger support.  Polarizing the identity $A_n|_W=0$ gives
\[
  \sum_{i\in S}u_iv_iw_i=0
  \qquad\text{for all }u,v\in W.
\]
Because $w_i\ne0$ for $i\in S$, the bilinear form
\[
  B_w(u,v)=\sum_{i\in S}w_i u_iv_i
\]
is nondegenerate on $\F^S$.  Thus $W\subseteq W^{\perp_{B_w}}$, so
\[
  2\dim W\le |S|\le n.
\]
Therefore every linear subspace contained in $A_n=0$ has dimension at most $\lfloor n/2\rfloor$.  Apply Proposition~\ref{prop:slice-geometry}.
\end{proof}

\begin{corollary}[Unbounded local-to-common gap]\label{cor:Fermat-lift}
For $n\ge2$, define
\[
  F_n=ab\sum_{i=1}^n v_i^3+cdv_1^3.
\]
Then
\[
  C_1(F_n)=\Cb{1}(F_n)=C_3(F_n)=\Cb{3}(F_n)=2,
\]
while
\begin{equation}\label{eq:Fermat-lift-lb}
  \Cb{1,3}(F_n)
  \ge
  \left\lceil\frac{\lceil n/2\rceil}{3}\right\rceil.
\end{equation}
Thus the ratio between common two-cut width and the maximum local width tends to infinity.  Moreover, $\Cb{1,3}(F_n)>C_1(F_n)+C_3(F_n)$ for $n\ge25$.
\end{corollary}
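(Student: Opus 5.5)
The corollary is a direct specialization of Theorem~\ref{thm:lifting} with $A=A_n$ and $B=v_1^3$, combined with Proposition~\ref{prop:Fermat-slice}. The plan is to check the hypotheses, read off the local values and the lower bound, and then do the small arithmetic for the final claims.

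\textbf{Hypotheses.} Both cubics are nonzero. The only irreducible factor of $B=v_1^3$ is $v_1$, up to scalars. Setting $v_1=0$ in $A_n$ leaves $\sum_{i\ge2}v_i^3$, which is nonzero for $n\ge2$. Hence $v_1\nmid A_n$ and $\gcd(A_n,v_1^3)=1$; this is exactly where $n\ge2$ is used.

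\textbf{Local values and lower bound.} Equation~\eqref{eq:local-values} gives the four local values equal to $2$. The lower bound in~\eqref{eq:lifting-main} gives
\[
\Cb{1,3}(F_n)\ge\left\lceil\tfrac{\max\{C_1(A_n),C_1(v_1^3)\}}{3}\right\rceil .
\]
Here $C_1(v_1^3)=1$ and, by Proposition~\ref{prop:Fermat-slice}, $C_1(A_n)=\lceil n/2\rceil\ge1$. The maximum is therefore $\lceil n/2\rceil$, which yields~\eqref{eq:Fermat-lift-lb}.

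\textbf{Unbounded ratio.} The maximum local width is constantly $2$, while the lower bound $\lceil\lceil n/2\rceil/3\rceil$ grows without bound. So the ratio tends to infinity.

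\textbf{Strict inequality for $n\ge25$.} We need $\Cb{1,3}(F_n)>4$, and it suffices that $\lceil\lceil n/2\rceil/3\rceil\ge5$. This holds iff $\lceil n/2\rceil>12$, i.e.\ $\lceil n/2\rceil\ge13$, i.e.\ $n\ge25$. As a check at the threshold: $n=25$ gives $\lceil 13/3\rceil=5$, while $n=24$ gives $\lceil 12/3\rceil=4$.

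There is no real obstacle here. The only point needing care is the coprimality check, since it is what makes Theorem~\ref{thm:lifting} applicable.
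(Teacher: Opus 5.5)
Your proposal is correct and follows the paper's own proof: check that $A_n$ and $v_1^3$ are coprime, specialize Theorem~\ref{thm:lifting} to $A=A_n$ and $B=v_1^3$, and insert $C_1(A_n)=\lceil n/2\rceil$ from Proposition~\ref{prop:Fermat-slice}. Compared with the paper, you also spell out two steps it leaves implicit: the coprimality check via $A_n|_{v_1=0}\neq0$ for $n\ge2$, and the threshold arithmetic $\lceil\lceil n/2\rceil/3\rceil\ge5\iff n\ge25$.
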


\begin{proof}
The cubics $A_n$ and $v_1^3$ are coprime for $n\ge2$.  Apply Theorem~\ref{thm:lifting} and Proposition~\ref{prop:Fermat-slice}.
\end{proof}

By Theorem~\ref{thm:abp-lower}, every homogeneous ABP for $F_n$ has width $\Omega(n)$.  This is compatible with the fact that $F_n$ has an $O(n)$-size arithmetic formula: width and formula size are different resources, and the lower bound is only linear.

\section{Second-derivative completeness}\label{sec:completeness}

We now prove the upper bound in Theorem~\ref{thm:main-intro}.

\begin{lemma}[Quadratic evaluations span the dual]\label{lem:eval-dual}
Let $W\le R_2$ be a $t$-dimensional space of quadrics over an infinite field.  The functionals
\[
  e_u:W\to\F,
  \qquad e_u(Q)=Q(u),
\]
span $W^*$.
\end{lemma}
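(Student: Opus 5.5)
The plan is to argue by annihilators: a set of functionals spans $W^*$ exactly when the only element of $W$ killed by all of them is zero. Suppose the evaluations $e_u$, $u\in V$, span a proper subspace of $W^*$. Then by finite-dimensional duality (the double annihilator of a subspace of $W^*$ is computed inside $W^{**}=W$) there is a nonzero $Q\in W$ with $e_u(Q)=0$ for every $u$. In other words, $Q(u)=0$ for all $u\in V$. This is the only reduction step, and it is pure linear algebra.

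The remaining step is to show that a quadric vanishing at every point of $V$ is the zero polynomial; this is the one place where the field hypothesis enters. Over an infinite field, a polynomial function that vanishes identically on $\F^N$ is the zero polynomial, which follows by induction on the number of variables: a nonzero univariate polynomial has only finitely many roots. For quadrics in characteristic zero one can also see this directly. Polarize to the symmetric bilinear form $\beta(u,v)=\tfrac12\bigl(Q(u+v)-Q(u)-Q(v)\bigr)$, which vanishes identically. Since $Q$ is determined by $\beta$, we get $Q=0$, contradicting the choice of $Q$. Hence the $e_u$ span $W^*$.

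For later use (choosing $t$ directions dual to $\Span\{Q_i\}$), I would also record the finite version of the statement. Because the $e_u$ span the $t$-dimensional space $W^*$, some $t$ of them, say $e_{u_1},\dots,e_{u_t}$, already form a basis. Equivalently, for a basis $Q_1,\dots,Q_t$ of $W$, the matrix $\bigl(Q_i(u_j)\bigr)$ is invertible for suitable $u_1,\dots,u_t$. Taking linear combinations then gives functionals dual to the $Q_i$. I would also note that $\partial_u^2Q=2Q(u)$ for a quadric $Q$, so these evaluations are exactly the square directional derivatives used in the completeness argument.

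There is no real obstacle here. The only point needing care is that the field is infinite (here it is algebraically closed), which is what allows "vanishes at every point" to imply "is the zero polynomial".
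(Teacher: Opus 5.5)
Your proof is correct and matches the paper's argument: if the span were proper there would be a nonzero common annihilator $Q\in W$, i.e.\ a nonzero polynomial vanishing at every point of $V$, which is impossible over an infinite field. Your two additions are also accurate and consistent with how the lemma is used in Theorem~\ref{thm:completeness}: the polarization check in characteristic zero, and the choice of $t$ directions $u_1,\dots,u_t$ with $(Q_i(u_j))$ invertible.
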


\begin{proof}
If their span were proper, some nonzero $Q\in W$ would be annihilated by every $e_u$.  Then $Q(u)=0$ for all $u$, which is impossible for a nonzero polynomial over an infinite field.
\end{proof}

For a homogeneous quadratic $Q$, $\partial_u^2Q=2Q(u)$.  Thus Lemma~\ref{lem:eval-dual} also says that square second-order differential evaluations span the dual of every finite-dimensional quadratic endpoint space.

\begin{theorem}[Universal completeness theorem]\label{thm:completeness}
Let $f\in R_5$ be nonzero and put $t=C_3(f)$.  Then
\begin{equation}\label{eq:completeness-upper}
  C_{1,3}(f)\le t\Dtwo(f)+2t^2.
\end{equation}
Consequently,
\begin{equation}\label{eq:main-sandwich}
  \left\lceil\frac{\Dtwo(f)}3\right\rceil
  \le\Cb{1,3}(f)
  \le C_{1,3}(f)
  \le C_3(f)\Dtwo(f)+2C_3(f)^2.
\end{equation}
\end{theorem}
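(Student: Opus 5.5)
Start from a minimal degree-$3$ decomposition $f=\sum_{i=1}^t G_iQ_i$ with $G_i\in R_3$, $Q_i\in R_2$. By minimality the $Q_i$ are linearly independent, so $W=\Span\{Q_i\}$ has dimension $t$. By Lemma~\ref{lem:eval-dual} the evaluations $e_u$ span $W^*$. Hence we can choose directions $u_1,\dots,u_t$ with $Q_i(u_j)=\delta_{ij}$: first pick $t$ directions whose evaluations form a basis of $W^*$, then change the basis of $W$ (replacing the $G_i$ by the corresponding linear combinations), which keeps a decomposition of the same length.

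Next compute $\partial_{u_j}^2 f$ by the Leibniz rule:
\[
\partial_{u_j}^2 f=\sum_i\bigl(\partial_{u_j}^2G_i\bigr)Q_i+2\sum_i\bigl(\partial_{u_j}G_i\bigr)\bigl(\partial_{u_j}Q_i\bigr)+\sum_i G_i\,\partial_{u_j}^2Q_i .
\]
The last sum equals $2G_j$, since $\partial_u^2Q=2Q(u)$. The first sum has slice rank at most $t$: it lies in the ideal of the linear forms $\partial_{u_j}^2G_i$, and it also lies in the ideal generated by $t$ linear forms after writing each $Q_i$ as linear times linear. The cleanest way is to group by the $t$ linear coefficients $\partial_{u_j}^2 G_i$, which are linear forms. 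The middle sum has slice rank at most $t$ by grouping on the linear forms $\partial_{u_j}Q_i$. Therefore
\[
C_1(G_j)\le C_1\bigl(\partial_{u_j}^2f\bigr)+2t\le\Dtwo(f)+2t,
\]
using subadditivity of $C_1$.

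Finally build the network. For each $j$ write $G_j=\sum_{s=1}^{r_j}\lambda_{js}H_{js}$ with $\lambda_{js}$ linear, $H_{js}$ quadratic, and $r_j\le\Dtwo(f)+2t$. Write each quadric as $Q_j=Q_j\cdot 1$. The two-cut representation needs products $p\cdot m\cdot q$ with $p$ linear, $m$ quadratic and $q$ quadratic. Take $p$ to range over the $\lambda_{js}$, $m=H_{js}$ and $q=Q_j$, placing a single nonzero middle entry in row $(j,s)$ and column $j$. The left interface then has dimension $\sum_j r_j\le t\Dtwo(f)+2t^2$, and the right interface is $\Span\{Q_j\}$, of dimension $t$. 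By Proposition~\ref{prop:endpoint-subspace}, $C_{1,3}(f)\le\max\{t\Dtwo(f)+2t^2,\,t\}$. This maximum equals $t\Dtwo(f)+2t^2$ because $\Dtwo(f)\ge1$ (Lemma~\ref{lem:Delta-positive}); for $f\ne0$ we also have $t\ge1$.

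The sandwich~\eqref{eq:main-sandwich} then follows by combining this bound with Corollary~\ref{cor:border-transfer} and the trivial inequality $\Cb{1,3}\le C_{1,3}$.

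The main obstacle is the dual-basis step. We must make sure that re-basing $W$ keeps the decomposition length $t$, and that exact point evaluations, not merely spanning functionals, can be realized by single square derivatives. Taking genuine evaluations $e_{u_1},\dots,e_{u_t}$ that form a basis and using the inverse matrix to re-base the $Q_i$ (and the $G_i$ contragrediently) handles both points. The remaining error-term bookkeeping is routine.
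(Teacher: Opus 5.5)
Your proof is correct and follows the paper's argument: a minimal $C_3$ decomposition, normalizing the quadrics against $t$ square-derivative directions, the $2t$-term Leibniz error bound, and slicing each $G_j$. Your normalization $Q_i(u_j)=\delta_{ij}$ differs from the paper's $\partial_{u_j}^2Q_i=\delta_{ij}$ only by a harmless factor of $2$. Your network uses $\Span\{Q_j\}$ as a $t$-dimensional right endpoint space instead of putting all $\sum_j C_1(G_j)$ products on the diagonal, and it gives the same bound. Your aside about writing $Q_i$ as ``linear times linear'' is unjustified and unnecessary, since grouping by the linear forms $\partial_{u_j}^2G_i$ alone suffices.
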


\begin{proof}
Choose a minimal degree-$3$ decomposition
\begin{equation}\label{eq:min-C3}
  f=\sum_{i=1}^tG_iQ_i,
  \qquad G_i\in R_3,\quad Q_i\in R_2.
\end{equation}
Minimality implies that the $Q_i$ are linearly independent.  Let $W=\Span\{Q_1,\ldots,Q_t\}$.  By Lemma~\ref{lem:eval-dual}, choose directions $u_1,\ldots,u_t$ such that the functionals $Q\mapsto\partial_{u_j}^2Q$ form a basis of $W^*$.  Change the basis of the $Q_i$ and make the inverse change to the coefficient cubics so that
\begin{equation}\label{eq:dual-normalization}
  \partial_{u_j}^2Q_i=\delta_{ij}.
\end{equation}

Differentiate~\eqref{eq:min-C3} twice in direction $u_j$.  The product rule and~\eqref{eq:dual-normalization} give
\begin{equation}\label{eq:isolate-G}
  \partial_{u_j}^2f
  =G_j+
  \sum_{i=1}^t\left[
    2(\partial_{u_j}G_i)(\partial_{u_j}Q_i)
    +(\partial_{u_j}^2G_i)Q_i
  \right].
\end{equation}
Each term in brackets is a product of a linear and a quadratic form.  Hence the error in~\eqref{eq:isolate-G} has slice rank at most $2t$, and subadditivity gives
\begin{equation}\label{eq:G-bound}
  C_1(G_j)\le C_1(\partial_{u_j}^2f)+2t
  \le \Dtwo(f)+2t.
\end{equation}

Choose optimal slice decompositions
\[
  G_j=\sum_{h=1}^{s_j}\ell_{jh}A_{jh},
  \qquad s_j=C_1(G_j),
\]
with $\ell_{jh}$ linear and $A_{jh}$ quadratic.  Substitution into~\eqref{eq:min-C3} yields
\[
  f=\sum_{j=1}^t\sum_{h=1}^{s_j}\ell_{jh}A_{jh}Q_j.
\]
Putting these products on the diagonal of a compressed transfer network gives
\[
  C_{1,3}(f)\le\sum_{j=1}^t C_1(G_j).
\]
Using~\eqref{eq:G-bound},
\[
  C_{1,3}(f)
  \le t(\Dtwo(f)+2t)
  =t\Dtwo(f)+2t^2.
\]
The remaining inequalities in~\eqref{eq:main-sandwich} are Corollary~\ref{cor:border-transfer} and the definition of border complexity.
\end{proof}

\begin{corollary}[Equivalence at bounded local width]\label{cor:bounded-equivalence}
For every fixed $t_0$, if $C_3(f)\le t_0$ then
\begin{equation}\label{eq:bounded-equivalence}
  \frac{\Dtwo(f)}3
  \le\Cb{1,3}(f)
  \le C_{1,3}(f)
  \le (t_0+2t_0^2)\Dtwo(f).
\end{equation}
\end{corollary}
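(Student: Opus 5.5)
This corollary is a direct specialization of Theorem~\ref{thm:completeness}, so the plan is to read off the sandwich~\eqref{eq:main-sandwich} and then absorb the additive term into a multiplicative constant. The lower inequality $\Dtwo(f)/3\le\Cb{1,3}(f)$ holds because $\lceil x\rceil\ge x$, or directly from Corollary~\ref{cor:border-transfer}. The middle inequality $\Cb{1,3}(f)\le C_{1,3}(f)$ holds by definition of border complexity: the width-$r$ locus is contained in its own closure.

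For the upper bound, set $t=C_3(f)\le t_0$ and start from $C_{1,3}(f)\le t\Dtwo(f)+2t^2$. The right-hand side is increasing in $t$ for $t\ge0$, so it is at most $t_0\Dtwo(f)+2t_0^2$. The key step is to invoke Lemma~\ref{lem:Delta-positive}: since $f$ is a nonzero quintic, $\Dtwo(f)\ge1$. This gives $2t_0^2\le 2t_0^2\Dtwo(f)$, and combining yields $C_{1,3}(f)\le(t_0+2t_0^2)\Dtwo(f)$.

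The only point needing care is the case $f=0$, where the nonzero hypothesis of Lemma~\ref{lem:Delta-positive} fails. Here all quantities are zero by the paper's convention, so~\eqref{eq:bounded-equivalence} holds trivially. The corollary is implicitly for nonzero quintics, matching Theorem~\ref{thm:completeness}, so no real obstacle arises. The "main" step is simply remembering to use $\Dtwo(f)\ge1$ to convert the additive $2t^2$ into a multiplicative term.
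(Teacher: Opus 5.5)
Your proposal is correct and matches the paper's proof, which simply combines Theorem~\ref{thm:completeness} with Lemma~\ref{lem:Delta-positive} to turn the additive $2t^2$ term into a multiplicative one. Your monotonicity-in-$t$ step and your separate treatment of $f=0$ are details the paper leaves implicit.
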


\begin{proof}
Apply Theorem~\ref{thm:completeness} and Lemma~\ref{lem:Delta-positive}.
\end{proof}

\begin{corollary}[No extraction-invisible family at bounded $C_3$]\label{cor:no-intrinsic}
There is no family of quintics $(f_n)$ with
\[
  C_3(f_n)=O(1),
  \qquad
  \Cb{1,3}(f_n)\to\infty,
  \qquad
  \Dtwo(f_n)=o(\Cb{1,3}(f_n)).
\]
The conclusion remains true if one additionally requires $C_1(f_n)=O(1)$.
\end{corollary}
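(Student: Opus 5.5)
The plan is a proof by contradiction that reads Corollary~\ref{cor:bounded-equivalence} asymptotically. Suppose a family $(f_n)$ satisfies $C_3(f_n)\le t_0$ for all $n$, $\Cb{1,3}(f_n)\to\infty$, and $\Dtwo(f_n)=o(\Cb{1,3}(f_n))$. Each $f_n$ is a nonzero quintic, because its border coherence diverges and so is eventually positive, while the zero polynomial has complexity zero by convention. This lets us apply Corollary~\ref{cor:bounded-equivalence} with the fixed constant $t_0$:
\[
  \Cb{1,3}(f_n)\le C_{1,3}(f_n)\le (t_0+2t_0^2)\,\Dtwo(f_n).
\]
Dividing gives
\[
  \frac{\Dtwo(f_n)}{\Cb{1,3}(f_n)}\ge \frac{1}{t_0+2t_0^2}>0
\]
for every $n$. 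This contradicts $\Dtwo(f_n)/\Cb{1,3}(f_n)\to0$. I would state the $o(\cdot)$ hypothesis in exactly this ratio form, which is legitimate because $\Cb{1,3}(f_n)$ is eventually positive.

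The key point is that the constant in Corollary~\ref{cor:bounded-equivalence} depends only on $t_0$, so it is uniform across the family. This in turn rests on Theorem~\ref{thm:completeness}, which turns the additive $2t^2$ into a multiplicative bound via Lemma~\ref{lem:Delta-positive}. Note that the hypothesis $\Cb{1,3}(f_n)\to\infty$ is not actually needed for the contradiction beyond ensuring $f_n\ne0$. The lower bound from Corollary~\ref{cor:border-transfer} then forces $\Dtwo(f_n)\to\infty$ as well, but the argument does not use this.

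For the strengthened statement that additionally requires $C_1(f_n)=O(1)$, no extra work is needed. Adding a hypothesis only shrinks the class of families, so a nonexistence statement remains true. I would also remark that this is not vacuous in spirit: the same contradiction applies regardless of $C_1$, and Theorem~\ref{thm:lifting} with the $F_n$ of Corollary~\ref{cor:Fermat-lift} shows that families with $C_1=C_3=2$ and divergent $\Cb{1,3}$ do exist, but necessarily with $\Dtwo$ comparable to $\Cb{1,3}$.

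The main obstacle is therefore only bookkeeping: making sure nonzeroness holds so that Lemma~\ref{lem:Delta-positive} applies, and reading the asymptotic hypothesis correctly. The substantive content was already done in Theorem~\ref{thm:completeness}.
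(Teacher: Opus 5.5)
Your argument is correct and is exactly how the paper intends this corollary to follow: the uniform constant $t_0+2t_0^2$ from Corollary~\ref{cor:bounded-equivalence} bounds the ratio $\Cb{1,3}(f_n)/\Dtwo(f_n)$, which contradicts $\Dtwo(f_n)=o(\Cb{1,3}(f_n))$, and the added hypothesis $C_1(f_n)=O(1)$ only shrinks the class of families. One small correction: divergence of $\Cb{1,3}(f_n)$ gives $f_n\ne0$ only for all sufficiently large $n$, so your ratio inequality should be stated for large $n$ rather than for every $n$; this is all the asymptotic hypothesis requires.
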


The theorem has a concrete interpretation.  An optimal degree-$3$ endpoint space consists of quadrics.  Quadrics are exactly dual to square second-order differential evaluations.  A bounded set of such contractions therefore reveals every hidden cubic coefficient, with a controlled Leibniz error.  The common-interface obstruction is real, as the lifting family shows, but at bounded $C_3$ it cannot remain invisible to second derivatives.

\section{Fixed-endpoint geometry and Tor}\label{sec:tor}

Two-cut coherence also has a useful commutative-algebra interpretation.  Fix endpoint spaces
\[
  P\le R_k,
  \qquad
  Q\le R_{d-\ell},
\]
and let $I=(P)$ and $J=(Q)$ be their homogeneous ideals.  Separate compatibility with the two endpoints means
\[
  f\in I_d\cap J_d,
\]
whereas a common realization with these same endpoints means
\[
  f\in(IJ)_d=P R_{\ell-k}Q.
\]
The obstruction is therefore the class
\begin{equation}\label{eq:tor-class}
  [f]_{P,Q}\in\left(\frac{I\cap J}{IJ}\right)_d.
\end{equation}

\begin{proposition}\label{prop:tor}
There is a natural graded isomorphism
\begin{equation}\label{eq:tor-iso}
  \frac{I\cap J}{IJ}
  \cong\Tor_1^R(R/I,R/J).
\end{equation}
\end{proposition}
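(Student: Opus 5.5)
The plan is to derive the isomorphism from the long exact Tor sequence attached to the short exact sequence
\[
  0\longrightarrow J\longrightarrow R\longrightarrow R/J\longrightarrow 0,
\]
by tensoring with $R/I$ over $R$. Every module and map in sight is graded, and the connecting homomorphism has degree zero, so the isomorphism we obtain will automatically be graded. Naturality will follow from the functoriality of the long exact sequence in the pair of ideals $(I,J)$.

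\textbf{Step 1: the long exact sequence.} Tensoring with $R/I$ gives
\[
  \Tor_1^R(R/I,R)\to\Tor_1^R(R/I,R/J)\xrightarrow{\ \partial\ } R/I\otimes_R J\to R/I\otimes_R R\to R/I\otimes_R R/J\to0.
\]
Since $R$ is free, $\Tor_1^R(R/I,R)=0$, so $\partial$ is injective. Hence $\Tor_1^R(R/I,R/J)$ is isomorphic to the kernel of the map
\[
  R/I\otimes_R J\longrightarrow R/I.
\]

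\textbf{Step 2: identify both terms.} We have the standard identifications
\[
  R/I\otimes_R J\cong J/IJ,
  \qquad
  R/I\otimes_R R\cong R/I.
\]
Under them, the map above becomes the map $J/IJ\to R/I$ induced by the inclusion $J\hookrightarrow R$, namely $j+IJ\mapsto j+I$.

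\textbf{Step 3: compute the kernel.} The kernel consists of the classes of those $j\in J$ with $j\in I$. It is therefore $(I\cap J)/IJ$, which makes sense because $IJ\subseteq I\cap J$. This proves~\eqref{eq:tor-iso}.

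\textbf{Main point of care.} The only step needing attention is the identification $R/I\otimes_R J\cong J/IJ$. This holds for any ideal: the tensor product equals $J/IJ$ by right exactness of $-\otimes_R J$ applied to
\[
  I\to R\to R/I\to0,
\]
since the image of $I\otimes_R J$ in $R\otimes_R J=J$ is exactly $IJ$. No flatness of $J$ is required. One must also check that this identification is compatible with the map to $R/I$, but that is immediate on generators $\bar r\otimes j\mapsto \overline{rj}$. All maps preserve degree, which gives the grading.
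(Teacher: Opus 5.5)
Your proof is correct and is essentially the paper's argument. The paper tensors $0\to I\to R\to R/I\to0$ with $R/J$ and identifies $\Tor_1^R(R/I,R/J)$ with $\ker(I/IJ\to R/J)=(I\cap J)/IJ$; you run the same long-exact-sequence computation with the roles of $I$ and $J$ interchanged, and your remarks on $R/I\otimes_R J\cong J/IJ$ (no flatness needed) and on the grading supply details the paper leaves implicit.
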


\begin{proof}
Tensor the short exact sequence $0\to I\to R\to R/I\to0$ with $R/J$.  The kernel of $I/IJ\to R/J$ is $(I\cap J)/IJ$, and exactness identifies it with $\Tor_1^R(R/I,R/J)$; see, for example,~\cite{Eisenbud1995}.
\end{proof}

Thus fixed-endpoint tension is a failure of transversality.  If generators of $J$ form a regular sequence and remain regular modulo $I$, the Koszul resolution remains exact after tensoring with $R/I$, so the Tor group vanishes and separate compatibility glues.  Theorem~\ref{thm:completeness} adds a quantitative statement in the quintic $(1,3)$ regime: when the quadratic endpoint dimension is bounded, any large failure of gluing produces a large cubic slice obstruction under a square contraction.

\section{Comparison with related notions}\label{sec:related}

The ambient ingredients of this paper are established: polynomial strength and slice rank, restricted strength, tensor-train rank, and homogeneous ABPs.  The new point is the shared-interface condition after quotienting by commutative multiplication.

\subsection{Restricted strength and slice rank}

The quantity $C_k$ is exactly degree-$k$ restricted strength.  At $k=1$ it is polynomial slice rank, geometrically the minimum codimension of a linear space contained in the hypersurface.  These notions originate in the strength framework around Stillman's conjecture and have since been studied algebraically and geometrically~\cite{AnanyanHochster2020,BikDraismaEggermont2019,BallicoVentura2021,GesmundoGhosalIkenmeyerLysikov2022,FlaviGesmundoOnetoVentura2025}.  Polynomial slice rank should not be confused with the slice rank of a fixed multilinear tensor, whose direct-sum and diagonal behavior is a separate theory~\cite{Gowers2021}.  The inequality $C_{k,\ell}\ge\max\{C_k,C_\ell\}$ is immediate, while Corollary~\ref{cor:Fermat-lift} shows that the gap can be unbounded and border-stable.

\subsection{Three-factor product rank}

For $\lambda=(k,\ell-k,d-\ell)$, let $S_\lambda(f)$ be the least number of summands in a decomposition of $f$ into products of three forms of the prescribed degrees.  This is the corresponding three-factor $\lambda$-strength; recent work relates such partition-rank variants to multiplicative complexity~\cite{BrandKaskiWang2026}.

\begin{proposition}\label{prop:lambda-strength}
For every $f$,
\begin{equation}\label{eq:lambda-strength}
  C_{k,\ell}(f)\le S_\lambda(f)\le C_{k,\ell}(f)^2.
\end{equation}
\end{proposition}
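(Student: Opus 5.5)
Both inequalities come from regrouping decompositions, so the plan is to turn each kind of decomposition directly into the other.

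\emph{Left inequality $C_{k,\ell}(f)\le S_\lambda(f)$.} Take an optimal three-factor decomposition
\[
  f=\sum_{s=1}^{S}a_s b_s c_s,\qquad a_s\in R_k,\ b_s\in R_{\ell-k},\ c_s\in R_{d-\ell},
\]
with $S=S_\lambda(f)$. Place these terms on the diagonal of an $S\times S$ transfer matrix by setting
\[
  p_s=a_s,\qquad m_{ss}=b_s,\qquad q_s=c_s,
\]
with all off-diagonal entries $m_{st}=0$. This is exactly the diagonal-network construction already used in the proofs of Theorem~\ref{thm:lifting} and Theorem~\ref{thm:completeness}. It gives a width-$S$ representation in the sense of Definition~\ref{def:complexities}, so $C_{k,\ell}(f)\le S$.

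\emph{Right inequality $S_\lambda(f)\le C_{k,\ell}(f)^2$.} Take an optimal representation
\[
  f=\sum_{i,j=1}^r p_i m_{ij} q_j,\qquad r=C_{k,\ell}(f).
\]
Each summand $p_i m_{ij} q_j$ is itself a product of three forms of degrees $(k,\ell-k,d-\ell)$, and there are $r^2$ such summands. Hence $S_\lambda(f)\le r^2$.

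\emph{Degenerate cases.} If some factor or some $m_{ij}$ is zero, we simply drop that term, which only lowers the count. The convention that the zero polynomial has complexity zero handles $f=0$.

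No step here is a genuine obstacle. The one point to check is that the diagonal padding matches the square format $i,j\le r$ in Definition~\ref{def:complexities}. It does: the index sets on both sides have size $S$, and Theorem~\ref{thm:ctn-equivalence} would in any case allow rectangular networks.
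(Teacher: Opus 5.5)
Your proposal is correct and follows the paper's proof: you embed an optimal three-factor decomposition diagonally in a square transfer network for the left inequality, and you expand the $r^2$ terms $p_i m_{ij} q_j$ of an optimal width-$r$ representation for the right inequality. Your remarks on zero terms and the square format are harmless extra care.
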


\begin{proof}
A sum of $s$ independent three-factor products embeds diagonally in a width-$s$ two-cut representation.  Conversely, expanding a width-$r$ transfer matrix produces at most $r^2$ three-factor products.
\end{proof}

Thus three-factor product rank counts independent paths, whereas two-cut coherence charges the dimensions of shared endpoint spaces.

\subsection{Tensor trains and ABPs}

For a fixed order-three tensor, tensor-train ranks are endpoint unfolding ranks~\cite{Oseledets2011}.  General graph tensor-network ranks place this in a broader geometric framework~\cite{YeLim2018}.  Theorem~\ref{thm:fiber-tt} differs in one essential respect: it minimizes path rank over the multiplication fiber $\mu^{-1}(f)$.  It is therefore not the tensor-train rank of a canonical polarization tensor.

Theorem~\ref{thm:abp-lower} gives the safe relationship with standard homogeneous ABPs: two-cut coherence is a lower bound on width.  The reverse direction must charge the hidden transfer labels, and Proposition~\ref{prop:quintic-compile} gives one fixed-degree compilation.  This distinction is consistent with the established rank characterizations in noncommutative models and the geometric study of ABP-width loci~\cite{Nisan1991,BlaeserIkenmeyerMahajanPandeySaurabh2020}.

\subsection{Novelty boundary}

To the author's knowledge, the precise shared-interface parameter in Definition~\ref{def:complexities}, its multiplication-fiber formulation, the border-stable lifting separation, and Theorem~\ref{thm:completeness} do not appear in the cited literature.  This is a conservative claim about the theorems, not a claim that projected subspace varieties or path tensor networks are new constructions.  In particular, the terms ``two-cut coherence'' and ``multiplication-fiber tensor-train width'' are descriptive terminology for the invariant studied here.

\section{Sharpness, scope, and consequences}\label{sec:scope}

The factor three in Corollary~\ref{cor:border-transfer} is asymptotically optimal for a universal black-box derivative transfer from arbitrary width-$r$ representations: Appendix~\ref{app:sharpness} constructs $H_r=p^{\mathsf T}Mq$ with
\[
  C_1\left(\frac12\partial_\tau^2H_r\big|_{\tau=0}\right)=3r-1.
\]
This does not rule out a better constant on the special support locus $abA+cdB$; such an improvement would need to use those support equations, not only the Leibniz identity.

The extraction-completeness theorem has an exact boundary.  It concerns degree five, cuts $(1,3)$, and ordinary $C_3$ on the right side.  It does not establish an analogous theorem for arbitrary degree or cut profiles, and it does not exclude intrinsic phenomena when the local endpoint width grows with the joint width.  The lifting family itself has growing common width but is computationally easy.  Accordingly, the paper establishes a new lower-bound measure and characterizes it in one fixed-degree regime; it does not establish a major algebraic complexity class separation.

One concrete consequence is nevertheless useful.  Since $C_{1,3}$ lower-bounds homogeneous ABP width, Corollary~\ref{cor:Fermat-lift} gives an explicit $\Omega(n)$ width lower bound for the easy family $F_n$.  More conceptually, the family shows that a common computation can require far more interface states than either interface requires when optimized separately, while Theorem~\ref{thm:completeness} shows that this discrepancy is fully witnessed by second derivatives when the degree-three local width is bounded.

\section{Conclusion}\label{sec:conclusion}

Two-cut coherence measures the cost of realizing two degree interfaces on one commutative tensor lift.  It can exceed both local restricted strengths, their sum, and their maximum by an unbounded factor, including after border closure.  For quintics at cuts $(1,3)$, however, this extra cost is not hidden: the largest cubic slice rank among square second derivatives controls it, up to constants whenever $C_3$ is bounded.  The main identity is the dimension-free sandwich
\[
  \left\lceil\frac{\Dtwo(f)}3\right\rceil
  \le\Cb{1,3}(f)
  \le C_{1,3}(f)
  \le C_3(f)\Dtwo(f)+2C_3(f)^2.
\]
This simultaneously supplies a border-stable lower-bound technique, an unbounded lifting separation, and a completeness theorem explaining the obstruction in the first nontrivial noncomplementary quintic regime.

\appendix

\section{Sharpness of the three-group transfer}\label{app:sharpness}

We prove that no universal bound $C_1(\partial_u^2H)\le Kr$ with $K<3$ can hold for all width-$r$ representations $H=p^{\mathsf T}Mq$.

\begin{lemma}[Disjoint-monomial cubic]\label{lem:disjoint-cubic}
Let
\[
  G_m=\sum_{s=1}^m x_sy_sz_s
\]
in $3m$ distinct variables.  Then $C_1(G_m)=m$.
\end{lemma}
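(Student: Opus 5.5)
The upper bound $C_1(G_m)\le m$ is immediate: $G_m=\sum_{s=1}^m x_s(y_sz_s)$ writes $G_m$ as a sum of $m$ products of a linear form and a quadric. Equivalently, by Proposition~\ref{prop:slice-geometry}, $G_m$ vanishes on the codimension-$m$ subspace $\{x_1=\cdots=x_m=0\}$.

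For the lower bound I would use the geometric characterization in Proposition~\ref{prop:slice-geometry}. It suffices to show that every linear subspace $W\le\F^{3m}$ on which $G_m$ vanishes has $\dim W\le 2m$, since then $\codim W\ge m$. I plan to induct on $m$ by slicing along one block. The case $m=0$ is trivial. For the case $m=1$, a subspace contained in the cone $xyz=0$ must lie in one of the three coordinate hyperplanes, because a vector space cannot be a finite union of proper subspaces. So it has dimension at most $2$.

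For the inductive step, let $\pi:W\to\F^3$ be projection onto the block $(x_m,y_m,z_m)$, and let $K=\ker\pi$. Then $K$ lies in the span of the first $m-1$ blocks and $G_{m-1}$ vanishes on it, so $\dim K\le 2(m-1)$ by induction. The task is to show $\dim\pi(W)\le 2$. This is the main obstacle, because $G_m|_W=0$ does not directly give $x_my_mz_m|_{\pi(W)}=0$: the other blocks contribute cross terms.

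To get past this, I would polarize and use the trilinear form
\[
  T(u,v,w)=\sum_s \operatorname{Sym}(x_s,y_s,z_s)(u,v,w),
\]
which vanishes identically on $W\times W\times W$. Taking $v,w\in K$ kills the last-block contributions, and taking $u\in K$ kills them too, but the mixed terms remain.

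Because of this, a cleaner route may be to avoid the induction and use a degree argument in the style of Proposition~\ref{prop:Fermat-slice}. Suppose $\codim W=c<m$. Then $W$ is cut out by $c$ linear forms $\ell_1,\ldots,\ell_c$, and $G_m$ lies in the ideal $(\ell_1,\ldots,\ell_c)$. Now consider the partial derivatives $\partial G_m/\partial x_s=y_sz_s$, together with those in $y_s$ and $z_s$. Differentiating $G_m=\sum_i \ell_i h_i$ gives
\[
  \partial G_m=\sum_i \ell_i\,\partial h_i+\sum_i (\partial \ell_i)\,h_i .
\]
Hence the span of the $3m$ first partials of $G_m$, taken modulo the ideal $(\ell_1,\ldots,\ell_c)$, lies in $\Span\{h_i\}$ modulo that ideal, which has dimension at most $c$.

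I would then choose $W'\subseteq W$ to be a generic subspace of $W$ and restrict everything to $W$. On $W$ the forms $y_sz_s,\ x_sz_s,\ x_sy_s$, for $s=1,\ldots,m$, span a space of dimension at most $c$. The final step is to show this forces $c\ge m$. The idea is that, for each $s$, at least one of $x_s,y_s,z_s$ restricts to a nonzero form on $W$, since otherwise the whole block vanishes on $W$ and that alone contributes codimension $3$. From this I would extract $m$ independent restricted quadrics, one per block, built from distinct variable blocks. I expect this counting to be the delicate point, and I would fall back on the projection induction if it resists.
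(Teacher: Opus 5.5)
\textbf{Genuine gap.} Your upper bound and your reduction to subspaces $W$ with $G_m|_W=0$ are fine. The lower bound, however, is never completed, and both intermediate claims you propose to prove fail. In the induction, the claim $\dim\pi(W)\le 2$ is false already for $m=2$. In coordinates $(x_1,y_1,z_1,x_2,y_2,z_2)$, the subspace $W=\{(-a,b,c,a,b,c)\}$ lies in $\{G_2=0\}$ and projects onto all of $\F^3$. So the projection argument only yields $\dim W\le 2(m-1)+3=2m+1$. In the derivative route, the step you flag as delicate does not work as sketched, for two reasons. First, a block with a variable nonzero on $W$ need not contribute any nonzero restricted quadric: on $W=\{y_s=z_s=0\ \forall s\}$ every $x_s$ survives, yet all $3m$ restricted partials vanish. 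Second, quadrics from different blocks can coincide on $W$: in the $m=2$ example, $y_1z_1|_W=y_2z_2|_W=bc$. Extracting $m$ independent restricted quadrics, one per block, is therefore the wrong mechanism.

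The missing idea is to bound the dimension of a zero locus instead of counting independent quadrics. Your congruence
\[
\partial G_m\equiv\sum_i(\partial\ell_i)h_i \pmod{(\ell_1,\dots,\ell_c)}
\]
already shows that every first partial of $G_m$ vanishes on $Z=W\cap V(h_1,\dots,h_c)$. The set $Z$ is cut out of the $(3m-c)$-dimensional space $W$ by $c$ homogeneous equations and contains $0$, so Krull's height theorem gives $\dim Z\ge 3m-2c$. On the other hand, $Z\subseteq\Sing(G_m)=\{x_sy_s=x_sz_s=y_sz_s=0\ \forall s\}$. This locus is a product of $m$ unions of three coordinate axes, so it has dimension $m$. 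Hence $3m-2c\le m$, that is, $c\ge m$. This is exactly the paper's proof, which phrases it directly for a decomposition $G_m=\sum_{h=1}^q\ell_hQ_h$ with $Z=V(\ell_1,\dots,\ell_q,Q_1,\dots,Q_q)$.
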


\begin{proof}
The displayed expression gives the upper bound.  Suppose
\[
  G_m=\sum_{h=1}^q\ell_hQ_h.
\]
At every common zero of the $2q$ forms $\ell_1,\ldots,\ell_q,Q_1,\ldots,Q_q$, all first derivatives of $G_m$ vanish.  Hence this common zero locus is contained in $\Sing(G_m)$.  Krull's height theorem gives
\[
  \codim\Sing(G_m)\le2q.
\]
For each variable triple, the singular equations are
\[
  x_sy_s=x_sz_s=y_sz_s=0,
\]
whose solution is the union of three coordinate axes and has dimension one.  Over the $m$ disjoint triples, $\dim\Sing(G_m)=m$ in an ambient space of dimension $3m$, so $\codim\Sing(G_m)=2m$.  Therefore $q\ge m$.
\end{proof}

\begin{proposition}[Asymptotic sharpness]\label{prop:sharpness}
For every $r\ge2$, there is a width-$r$ quintic representation $H_r=p^{\mathsf T}Mq$ and a variable $\tau$ such that
\[
  C_1\left(\frac12\partial_\tau^2H_r\big|_{\tau=0}\right)=3r-1.
\]
\end{proposition}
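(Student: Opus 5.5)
The plan is to build $H_r$ so that the three lines of the Leibniz identity of Lemma~\ref{lem:three-group}, evaluated at $\tau=0$, contribute $r$, $r$ and $r-1$ monomials of the form $xyz$, all in pairwise disjoint triples of fresh variables. The lower bound then comes from Lemma~\ref{lem:disjoint-cubic}, and the upper bound is the displayed monomial count.

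First I fix the $\tau$-dependence so that $\tau$ enters $p$, $M$ and $q$ to at most the degree each entry allows. Then
\[
  \tfrac12\partial_\tau^2H_r\big|_{\tau=0}
  = p_\tau^{\mathsf T}M_\tau q+\tfrac12p^{\mathsf T}M_{\tau\tau}q
  +p_\tau^{\mathsf T}Mq_\tau+p^{\mathsf T}M_\tau q_\tau+\tfrac12p^{\mathsf T}Mq_{\tau\tau},
\]
with everything evaluated at $\tau=0$. The natural ansatz is the following.
\begin{itemize}
  \item Take $p_i=x_i+\tau\epsilon_i$, with $\epsilon_i$ scalars.
  \item Take $q_j=y_jz_j+\tau s_j+\tau^2\gamma_j$, with $s_j$ linear forms and $\gamma_j$ scalars.
  \item Take $M_{ij}=N_{ij}+\tau^2\beta_{ij}$, with $N_{ij}$ quadratic in fresh variables, so that $M_\tau$ vanishes at $\tau=0$.
\end{itemize}
The extracted cubic then becomes
\[
  \sum_{i,j}\beta_{ij}x_iy_jz_j+\sum_{i,j}\epsilon_iN_{ij}s_j+\sum_{i,j}\gamma_jx_iN_{ij}.
\]
These three sums realize, respectively, the $q$-line, the $q_\tau$-line and the $p$-line of~\eqref{eq:three-group}.

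Next I choose the parameters so that each sum is a disjoint sum of monomials.
\begin{itemize}
  \item $\beta=I$ makes the first sum $\sum_i x_iy_iz_i$.
  \item $N$ is made diagonal with $N_{ii}=u_iw_i$, and $\gamma_j$ is supported on all $j$, so the third sum is $\sum_i x_iu_iw_i$.
\end{itemize}
Then $x_i$ is shared between the first and third groups. I absorb this either by replacing $p_i$ in one group with independent variables (using off-diagonal support of $\beta$), or by recognizing that the collision is exactly where the loss of one monomial occurs.

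The second sum must contribute about $r$ further independent triples through $p_\tau^{\mathsf T}Nq_\tau$. This needs $p_\tau$ and $N$ arranged so that the sum runs over $r-1$ new entries of $N$ that do not feed into the third sum. I would first try making one column of $N$ carry those entries while dropping $\gamma$ on that column. I would then check that the column which must be shared costs exactly one monomial, for instance by merging $u_1w_1x_1+u_1w_1s_1=u_1w_1(x_1+s_1)$. That collision is what produces $3r-1$ rather than $3r$.

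Finally I would apply an invertible linear change of variables, for example $x_1+s_1\mapsto x_1'$, to bring the extracted cubic to the form $G_{3r-1}$ of Lemma~\ref{lem:disjoint-cubic} in $3(3r-1)$ independent variables. Since $C_1$ is invariant under linear changes of coordinates, this gives $C_1=3r-1$ exactly.

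I expect the main obstacle to be the bookkeeping in the choice of supports for $\beta$, $\gamma$, $\epsilon$ and $N$. Every entry of $M$ at $\tau=0$ feeds both the $p_\tau^{\mathsf T}Mq_\tau$ term and the $\frac12p^{\mathsf T}Mq_{\tau\tau}$ term, so unwanted cross monomials must be killed without collapsing too many triples. I would settle this by first working out $r=2$ explicitly, aiming for $5$ disjoint triples, and then extending by direct sum of one block that contributes $3$ new triples per unit increase of $r$.
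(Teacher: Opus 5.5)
\textbf{The ansatz cannot reach $3r-1$.} Setting $M_{ij}=N_{ij}+\tau^2\beta_{ij}$, so that $M_\tau|_{\tau=0}=0$, caps the extracted cubic at slice rank $2r$ for every choice of parameters. In your formula
\[
\sum_{i,j}\beta_{ij}x_iy_jz_j+\sum_{i,j}\epsilon_iN_{ij}s_j+\sum_{i,j}\gamma_jx_iN_{ij},
\]
the first and third sums lie in the ideal $(x_1,\ldots,x_r)$, and the second lies in $(s_1,\ldots,s_r)$. So the whole cubic lies in an ideal generated by $2r$ linear forms, and $2r<3r-1$ for $r\ge2$.

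This already gives $4<5$ at $r=2$, so your planned base case fails. A direct sum of blocks of this shape also gains at most $2$ per unit of width.

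The collision you noticed between the first and third groups is therefore not a one-monomial loss; it costs an entire group of $r$. The cause is structural. $p_\tau$ and $M_{\tau\tau}$ are always scalar, so under your ansatz the $q$-line of~\eqref{eq:three-group} reduces to $p^{\mathsf T}M_{\tau\tau}q$. Its coefficients of $q$ are linear combinations of the entries of $p$. Off-diagonal support of $\beta$ cannot change this, and neither can a later change of variables.

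In general the extracted cubic lies in the ideal generated by three families of $r$ linear forms each:
\begin{itemize}
\item the entries of $p|_{\tau=0}$;
\item the entries of $q_\tau|_{\tau=0}$;
\item the entries of $(p_\tau^{\mathsf T}M_\tau)|_{\tau=0}$.
\end{itemize}
Getting close to $3r$ requires all three families, and in particular $M_\tau|_{\tau=0}\neq0$.

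\textbf{The missing idea.} Give $M$ a $\tau$-linear part, so that $p_\tau^{\mathsf T}M_\tau q$ supplies the third family of linear factors. Then arrange supports so that $p^{\mathsf T}M_\tau q_\tau$ and $p^{\mathsf T}M_{\tau\tau}q$ vanish at $\tau=0$. The paper takes
\begin{itemize}
\item $p=(\tau,P_2,\ldots,P_r)^{\mathsf T}$;
\item $M_{1j}=A_jB_j+\tau L_j$ and $M_{i1}=E_iF_i$ for $i\ge2$, with all other entries zero;
\item $q_1=U_1V_1+\tau Y_1+\tau^2$ and $q_j=U_jV_j+\tau Y_j$ for $j\ge2$.
\end{itemize}
Since $p|_{\tau=0}$ has first entry zero and $M_\tau$ is supported in row $1$, the term $p^{\mathsf T}M_\tau q_\tau$ vanishes, and $M_{\tau\tau}=0$. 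The $\tau^2$-coefficient is then
\[
\sum_jL_jU_jV_j+\sum_jY_jA_jB_j+\sum_{i\ge2}P_iE_iF_i.
\]
Its linear factors come from $M_\tau$, $q_\tau$ and $p$ respectively. The triples are already disjoint, so Lemma~\ref{lem:disjoint-cubic} applies directly with no change of variables.

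The ``$-1$'' also arises differently from your guess. It is not a merged monomial: it comes from $p_1|_{\tau=0}=0$, because the entry of $p$ that carries $p_\tau$ contributes nothing to the $p$-line $p^{\mathsf T}Mq_{\tau\tau}$.
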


\begin{proof}
Use independent linear variables
\[
  L_j,U_j,V_j,A_j,B_j,Y_j\quad(1\le j\le r)
\]
and
\[
  P_i,E_i,F_i\quad(2\le i\le r),
\]
together with $\tau$.  Put
\[
  p=(\tau,P_2,\ldots,P_r)^{\mathsf T}.
\]
Let the only nonzero entries of $M$ be
\[
  M_{1j}=A_jB_j+\tau L_j\quad(1\le j\le r),
  \qquad
  M_{i1}=E_iF_i\quad(2\le i\le r).
\]
Set
\[
  q_1=U_1V_1+\tau Y_1+\tau^2,
  \qquad
  q_j=U_jV_j+\tau Y_j\quad(2\le j\le r).
\]
Extracting the coefficient of $\tau^2$ gives
\[
  \frac12\partial_\tau^2(p^{\mathsf T}Mq)\big|_{\tau=0}
  =\sum_{j=1}^rL_jU_jV_j
   +\sum_{j=1}^rY_jA_jB_j
   +\sum_{i=2}^rP_iE_iF_i.
\]
The right side is a sum of $3r-1$ cubic monomials on disjoint triples.  Lemma~\ref{lem:disjoint-cubic} completes the proof.
\end{proof}

\section{Supplementary exact verification}\label{app:verification}

The supplementary archive contains two independent exact-arithmetic scripts.  The first uses symbolic polynomial expansion; the second uses custom sparse polynomial dictionaries over rational coefficients.  They verify:
\begin{enumerate}[leftmargin=2em]
  \item the three-group square-derivative identity~\eqref{eq:three-group};
  \item the four-group mixed-derivative identity mentioned after Corollary~\ref{cor:border-transfer};
  \item the lifting extraction identities $(\partial_a+\partial_b)^2L=2A$ and $(\partial_c+\partial_d)^2L=2B$;
  \item the coefficient-isolation identity~\eqref{eq:isolate-G};
  \item the $3r-1$ sharpness construction for several symbolic values of $r$.
\end{enumerate}
The scripts are consistency checks only.  The lower bounds, closedness statements, and completeness theorem are proved in the text and do not rely on computation.

\bibliographystyle{ACM-Reference-Format}
\bibliography{references}

\end{document}